\DeclareMathAlphabet{\pazocal}{OMS}{zplm}{m}{n}
\DeclareMathAlphabet{\mymathbb}{U}{bbold}{m}{n}
\newcommand{\One}{\mymathbb{1}}
\newcommand{\Zero}{\mymathbb{0}}
    \definecolor{linkcolor}{HTML}{0D6A9E}
    \definecolor{3blue}{HTML}{0072B2}
    \definecolor{3green}{HTML}{009E73}
    \definecolor{3ochre}{HTML}{E69F00}
    \definecolor{3yellow}{HTML}{F0E442}
    \definecolor{3cyan}{HTML}{56B4E9}
    \definecolor{3red}{HTML}{D55E00}
    \definecolor{3pink}{HTML}{CC79A7}
    \definecolor{2blue}{HTML}{1A85FF}
    \definecolor{2red}{HTML}{D41159}
    \definecolor{mygray}{HTML}{989898}
    \tikzset{vertl/.style={anchor=south, rotate=90, inner sep=1mm}}
    \tikzset{vertr/.style={anchor=south, rotate=-90, inner sep=1mm}}
    \newtheoremstyle{komait}
      {\topsep}   
      {\topsep}   
      {\itshape}  
      {0pt}       
      {\bfseries\sffamily} 
      {.}         
      {5pt plus 1pt minus 1pt} 
      {}          
    \newtheoremstyle{komanormal}
      {\topsep}   
      {\topsep}   
      {\rmfamily}  
      {0pt}       
      {\bfseries\sffamily} 
      {.}         
      {5pt plus 1pt minus 1pt} 
      {}          
    \theoremstyle{komait}
    \newtheorem{theorem}{Theorem}[section]
    \newtheorem{definition}[theorem]{Definition}
    \newtheorem{problem}{Problem}
    \theoremstyle{komanormal}
    \NewCommandCopy{\proofqedsymbol}{\qedsymbol}
    \newtheorem{example}[theorem]{Example}
    \newtheorem{remark}[theorem]{Remark}
\newcommand{\MC}{M\!C}
\begin{document}

\title{\usekomafont{subtitle}\LARGE\vspace{-5em}Okamoto's symmetry on the representation space\\of the sixth Painlevé equation}
\author{\raisebox{-.5ex}{\href{https://orcid.org/0000-0002-4975-8774}{\includegraphics[height=15pt]{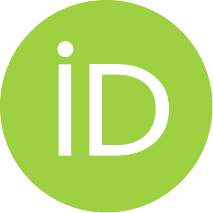}}}\hspace{.5em}Davide DAL MARTELLO\footnote{\hspace{.4em}Department of Mathematics ``Tullio Levi-Civita'', Università degli studi di Padova, Via Trieste 63, 35121 Padova (Italy)\\
\faEnvelopeO\hspace*{.5em}davide.dalmartello@unipd.it, \href{mailto:contact@davidedalmartello.com}{contact@davidedalmartello.com}}}
\dedication{%
  In loving memory of Masatoshi Noumi.\\
  May this be a stepping stone in fulfilling\\
  the last inspiring wish you had for me.\vspace{-2.5em}
}
\date{\vspace{-1em}}

\maketitle

\begin{abstract}\noindent
The sixth Painlevé equation ($P_{V\!I}$) admits dual isomonodromy representations of type $2$-dimensional Fuchsian and $3$-dimensional Birkhoff. 
Taking the multiplicative middle convolution of a Teichm\"uller $\pazocal{X}$-coordinatization for the Fuchsian monodromy group, we give Okamoto's symmetry $w_2$ of $P_{V\!I}$ a monodromic realization in the language of cluster $\pazocal{X}$-mutations. The explicit mutation formula is encoded in dual geometric terms of colored equilateral triangulations and star-shaped fat graphs. Moreover, this realization has a known additive analogue through the middle convolution for Fuchsian systems, and dual formulations for both the Birkhoff representation and its Stokes data exist. We give this quadruple of maps, each one realizing $w_2$, a unified diagrammatic description in purely convolutional terms.
\end{abstract}

\noindent
\begin{center}{\vspace{-.5em}\small{\textbf{\textsf{Keywords}}\hspace{.5em}Painlevé VI, middle convolution, cluster algebra.}}
\end{center}

\renewcommand{\contentsname}{\textcolor{linkcolor}{Contents}}

\tableofcontents

\section{Introduction}

The \textbf{sixth Painlevé equation} ($P_{V\!I}$), first encountered by Paul Painlevé \cite{Painleve1900} while in a search for new special functions, is nowadays an integral part of the mathematical physicist's toolkit.
Despite the naming, the full form of $P_{V\!I}$ is due to Richard Fuchs's study \cite{Fuchs1905} of monodromy preserving deformations of the eponymous type of linear system 
\begin{equation*}\label{Fuchn}
\frac{\mathrm d}{{\mathrm d}\lambda} \Phi=\left(
\sum_{k=1}^{p}\frac{{A}_k}{\lambda-u_k} \right)\Phi,\quad A_k\in\mathfrak{g},
\end{equation*}
for a Lie algebra $\mathfrak{g}$ and pairwise distinct complex constants $u_1,\ldots,u_p$.
These deformations are characterized by the Schlesinger equations, whose specialization to $\mathfrak{sl}_2(\mathbb{C})$-systems with finite singularities $(u_1,u_2,u_3)=(0,1,t)$ and spectral data $\{\theta_1,\theta_2,\theta_3,\theta_\infty\}$ indeed reduces to $P_{V\!I}(\theta)$.

Okamoto \cite{Okamoto1987} showed that the solution space, whose distinguished special functions are known as the $P_{V\!I}$ transcendents, admits a group of symmetries given by B\"acklund birational transformations.
Each symmetry maps solutions to solutions by changing parameters as an element of $W(\tilde{D}_4)$, the affine Weyl group of type $D_4$.
Inaba, Iwasaki, and Saito \cite{Inaba2004} later elucidated the true genesis of this group by switching to the monodromic viewpoint: Okamoto's B\"acklund transformations on the de Rham moduli space are singled out by being those covering the identity on the Betti moduli space through the Riemann-Hilbert (RH) correspondence.
More precisely, let
\begin{equation}\label{Aspace}
    \pazocal{A}(\theta)=\pazocal{F}(\theta)/\Gamma,
\end{equation}
for $\Gamma$ the gauge group and the \textbf{connection space}
\begin{equation}\label{Sysspace}
\begin{aligned}
    \pazocal{F}(\theta)=\bigg\{ {\mathrm d}-\left(\frac{{A}_1}{\lambda}+\frac{{A}_2}{\lambda-1}+\frac{{A}_3}{\lambda-t}\right){{\mathrm d}\lambda} \,\Big|\, &{A}_1+{A}_2+{A}_3+{A}_\infty=\Zero,\\[-.5em]&\mathrm{eig}({A}_k)=\left\{\pm\textstyle\frac{\theta_k}{2}\right\} \text{ for } k=1,2,3,\infty\bigg\},
\end{aligned}
\end{equation}
denote the moduli space of meromorphic $\mathfrak{sl}_2(\mathbb{C})$-connections on the four-punctured Riemann sphere $\Sigma_{0,4}$ and
\begin{equation}
\pazocal{M}(a)=\pazocal{R}(\iota)/SL_2(\mathbb{C}),
\end{equation}
for the \textbf{representation space}
\begin{equation}\label{Repspace}
\begin{aligned}
    \pazocal{R}(\iota)=\Big\{ (M_1,M_2,M_3,M_\infty) \,\big|\, &M_1M_2M_3M_\infty=\One,\\[-.5em]&\hbox{eig}(M_k)= \big\{\iota_k^{\pm1}\big\} \text{ for } k=1,2,3,\infty\Big\}
\end{aligned}
\end{equation}
and local data $a_k=\mathrm{Tr}(M_k)$,
be the associated moduli space of monodromy $SL_2(\mathbb{C})$-representations.

\begin{remark}
    Let us point out that $\iota_k=e^{\pi i \theta_k}$: multi-valuedness of a fundamental solution $\Phi(\lambda)$ is encapsulated by the $SL_2(\mathbb C)$-subgroup of monodromy matrices
\begin{equation}\label{Fuchdata}
\left\langle M_1,M_2,M_3,M_\infty \,\big|\, M_1M_2M_3M_\infty=\One\right\rangle
\end{equation}
whose generators, essentially, exponentiate the matrix residua of connections \eqref{Sysspace}.
\end{remark}

Recombining local data as
\begin{equation}\begin{cases}\label{omega}
    \omega_i=a_ia_4+a_ja_k, \quad i=1,2,3, \\
    \omega_4=a_1a_2a_3a_4+a_1^2+a_2^2+a_3^2+a_4^2-4,
\end{cases}\end{equation}
the latter moduli space is well-known to be identified via the cyclic assignment $x_i=\mathrm{tr}(M_jM_k)$ with the \textbf{monodromy manifold} $\pazocal{S}(\omega)$, i.e., the Fricke cubic surface
\begin{equation}\label{mon_mfd}
    \pazocal{S}(\omega)=\big\{(x_1,x_2,x_3)\in\mathbb{C}^3 \,\big|\,x_1x_2x_3+x_1^2+x_2^2+x_3^2-\omega_1x_1-\omega_2x_2-\omega_3x_3-\omega_4=0\big\}.
\end{equation}
 Then, for $w_i\in W(\tilde{D}_4)$ a generating reflection---realized as a change of parameters $\theta$---whose lift as B\"acklund transformation we denote by $s_i$, the following square
\begin{equation}\label{Masahiko}
        \begin{tikzcd}[row sep=large, column sep=5em]
    \pazocal{A}(\theta) 
        \arrow[d,swap,"\mathrm{RH}"]
      \arrow[r,"s_i"]
    & \pazocal{A}\left(w_i(\theta)\right)
      \arrow[d,swap,"\mathrm{RH}"']
    \\
     \pazocal{S}(\omega) \arrow[r,"\mathrm{id}"] & \pazocal{S}(\omega)
    \end{tikzcd}
    \end{equation}
 commutes. Switch \eqref{omega} is necessary for this characterization in that local data are invariant for all but one generator of $W(\tilde{D}_4)$, visualized as the central node in the Dynkin diagram \ref{fig:aD4} for Okamoto's notation $w_2$, that instead preserves data $\omega$. In other words, the B\"acklund transformation $s_2$ alters local data $a$ but preserves the ``global'' ones $\{x,\omega\}$ coordinatizing the cubic.
 \begin{figure}[!t]
    \centering
    \includegraphics[height=2.65cm]{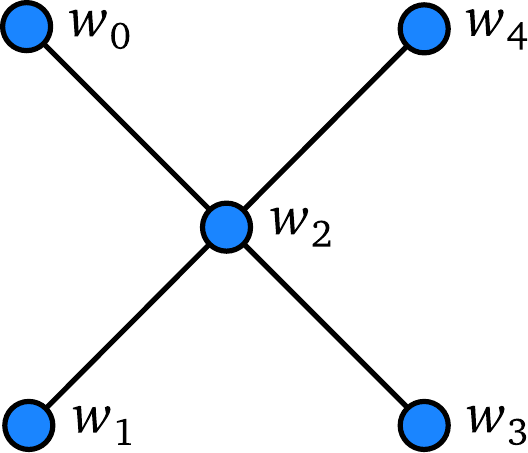}
    \caption{Okamoto's symmetries $w_i$ for $P_{V\!I}$ on the affine Dynkin diagram $\tilde{D}_4$.}\label{fig:aD4}
\end{figure}

This settles the understanding of $w_2$ in the moduli sense, but leaves the characterization problem open in terms of the unquotiented spaces:

\begin{problem}\label{pb1}
    Lift diagram \eqref{Masahiko} for $w_2$ to the connection and representation spaces, namely unravel the unknown arrows making the following diagram commute:
    \begin{equation}
        \begin{tikzcd}[row sep=large, column sep=5em]
    \pazocal{F}(\theta) 
        \arrow[d,swap,"\mathrm{RH}"]
      \arrow[r,"?"]
    & \pazocal{F}\!\left(w_2(\theta)\right)
      \arrow[d,swap,"\mathrm{RH}"']
    \\
     \pazocal{R}(\iota) \arrow[r,"?"] & \pazocal{R}\left(w_2(\iota)\right)
    \end{tikzcd}
    \end{equation}
\end{problem}

The main result of this paper is the explicit construction of the lower arrow. In light of the above considerations on group invariants, this is expected to be far from the trivial moduli counterpart.

A necessary working tool over the representation space \eqref{Repspace} is a coordinatization of the monodromy group: this rather nontrivial ingredient is provided, endowed with a crucial $\pazocal{X}$-cluster Poisson structure, by the (higher) Teichm\"uller machinery developed in \cite{DalMartello2024}.
Explicitly, the triple of independent generators corresponding to the finite punctures reads over the cluster coordinates $Z_{O2},Z_{B2},Z_{G2}$ as
\begin{equation}\label{Ms}
\begin{aligned}
    &M_1=\begin{pmatrix}
        0 & \iota_1^{-1}Z_{O2}^{-1}\\[.5em]
        -\iota_1 Z_{O2} & \iota_1+\iota_1^{-1}
    \end{pmatrix},\\
    &M_2=\begin{pmatrix}
        \iota_2+\iota_2^{-1}+\iota_2^{-1}Z_{B2}^{-1} & \iota_2+\iota_2^{-1}+\iota_2^{-1}Z_{B2}^{-1}+\iota_2Z_{B2}\\[.5em]
        -\iota_2^{-1}Z_{B2}^{-1} & -\iota_2^{-1}Z_{B2}^{-1}
    \end{pmatrix},\\
    &M_3=\begin{pmatrix}
        \iota_3+\iota_3^{-1}+\iota_3Z_{G2} & \iota_3Z_{G2}\\[.5em]
        -\iota_3-\iota_3^{-1}-\iota_3^{-1}Z_{G2}^{-1}-\iota_3Z_{G2} & -\iota_3Z_{G2}
    \end{pmatrix},
\end{aligned}
\end{equation}
and the cluster structure allows to distill Okamoto's transformation on $\pazocal{R}(\iota)$ into a single rational map of cluster charts, whose shape
\begin{equation}\label{tastemut}
    Z_\alpha \longmapsto \frac{1+Z_\alpha+Z_\alpha Z_\beta}{Z_\beta(1+Z_\gamma+Z_\alpha Z_\gamma)}
\end{equation}
can be purely encoded in the language of $\pazocal{X}$-mutations.

Before detailing its cluster combinatorics, the above formula must be stressed to be far from Laurent, thus inducing ``unorthodox'' transformations on monodromy matrices \eqref{Ms}: entries turn truly rational, eluding the customary restriction to universally Laurent elements---which define the regular space of functions for the so-called cluster $\pazocal{X}$-variety \cite{Fock2009}. Nevertheless, being encapsulated as a map of $\pazocal{X}$-coordinates by just a sequence of mutations, $w_2$ induces on the geometry of the cluster variety an isomorphism of cluster structures. 

\begin{remark}
    Overstepping Laurentness is generally thought of as a shortcoming, and yet it is precisely the leap needed to capture $w_2$ on monodromy \emph{matrices}. Indeed, unlike coordinates \eqref{mon_mfd}, entries of the $\pazocal{X}$-coordinatization overstep the regular $\pazocal{X}$-variety in the first place by failing to be Laurent under any mutation, and this fact stems from the need to capture the entire matricial information of the monodromy group, i.e., handle the representation variety $\pazocal{R}(\iota)$ in place of the usual monodromy manifold $\pazocal{S}(\omega)$.
\end{remark}

For $\mu_*$ the mutation at coordinate $Z_*$, the rational map of $\pazocal{X}$-coordinates is indeed understood in cluster terms via the recognizable (cf. \Cref{rmk:reflection}) mutation formula
\begin{equation}\label{mutprel}
\mu_{w_2}:=\mu_\beta\mu_\gamma\mu_\beta\mu_\gamma\mu_\beta\mu_\alpha\mu_\gamma\mu_\beta\mu_\alpha,
\end{equation}
bringing a wealth of combinatorics into the picture.

To start with, in the quiver-theoretic terms of classical cluster algebra, the structure ruling mutations of our $\pazocal{X}$-coordinatization is of type $A_3$: essentially, the quiver resulting from the Teichm\"uller machinery is a $3$-cycle (\Cref{fig:2amalquiver}). In this language, the sequence of mutations \eqref{mutprel} is singled out by leaving such quiver invariant.
In Painlevé-theoretic terms, mutations manifest in the language of fat graph flips. For $P_{V\!I}$ and its punctured domain $\Sigma_{0,4}$, the fat graph structure is $3$-star-shaped (\Cref{fig:operations}) and its geometry encodes the cluster dynamics of $\pazocal{X}$-coordinates, which are in one-to-one correspondence with the fat graph's edges, by giving a dictionary between flip $\alpha$ and mutation $\mu_\alpha$.
Thus, with unorthodox transformations come unorthodox flips: in order to capture $w_2$'s mutation formula, we allow fat graph flips even on edges incident to a loop---providing a companion recipe for the transformation of the loop's corresponding coordinate. This oversteps the restriction of the current framework, flips of which only preserve loops (cf. Figure 3 in \cite{Chekhov2020}) and lead to generalized $\pazocal{A}$-mutations.
We name the geometric operation resulting from sequence \eqref{mutprel} under this new set of rules as \textbf{inside-out}, after its reversal effect on the star-shape visualized in \Cref{fig:operations}.

\begin{figure}[!t]
    \centering
    \includegraphics[width=.6\textwidth]{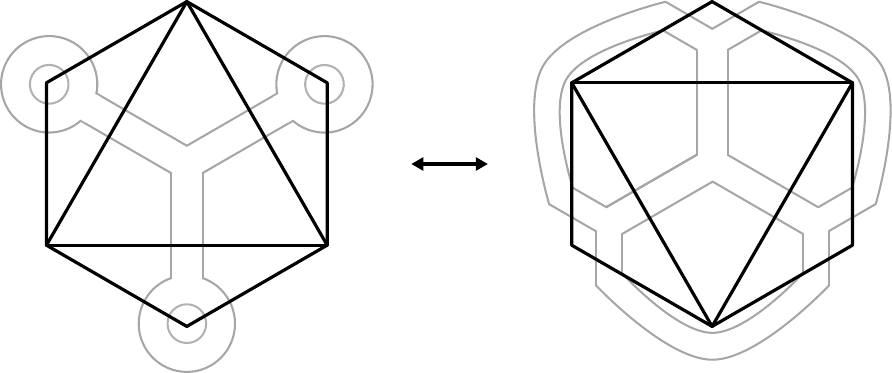}
    \caption{Dual characterizations for the monodromic $w_2$: the $\pi$-rotation on equilateral triangulations of the hexagon $\{6\}$ and the inside-out on star-shaped fat graphs of the $4$-punctured Riemann sphere $\Sigma_{0,4}$, namely the domain of the Fuchsian representation \eqref{Sysspace}.}\label{fig:operations}
\end{figure}


Dualizing fat graphs, the inside-out operation translates to the language of triangulations. In fact, the resulting dual characterization of $w_2$'s mutation formula is more essential and allows to be faithfully codified by the polygonal shape of the colored associahedron.
Indeed, the set of star-shaped fat graphs of the $4$-punctured sphere $\Sigma_{0,4}$ dualizes to that of triangulations of the hexagon $\{6\}$, and fat graph flips translate to flips of triangulations. In particular, chart $\{Z_{O2},Z_{B2},Z_{G2}\}$ is attached to the triangulation's defining triple of chords.
The geometry underlying the combinatorics of flips of triangulations for the hexagon is well-known to be given by the so-called $3$-dimensional associahedron $\mathbb{A}_3$.  
However, for flips to succeed in capturing the exact rational formula for $w_2$ through the corresponding mutations, we must be able to tell apart the dynamics of \emph{individual} cluster coordinates.
We achieve this refinement by passing to the \textbf{colorful associahedron} $\mathbb{A}_3^c$, whose chords are painted with mutually distinct colors from a selected palette of three. 
In particular, the dynamics of each cluster variable, now attached to a colored chord, is fully captured by the combinatorics of flips on $\mathbb{A}^c_3$ and $w_2$ is singled out as the $\pi$-rotation on equilateral triangulations visualized in \Cref{fig:operations}, which showcases the compatibility with the inside-out.

\begin{remark}
    In cluster algebra lingo, painting the triangulation corresponds to consider labeled seeds instead of usual up-to-permutation classes. The exchange graph for type $A_n$, whose vertices correspond to equivalence classes of seeds, is indeed given by the $1$-skeleton of Stasheff's $n$-associahedron \cite{Fomin2021}.
    The colored $3$-associahedron $\mathbb{A}_3^c$ is precisely a covering of the standard one by the symmetric group, i.e., reduces to the latter ``by going color blind''. For finite type cluster algebras, this is an invitation to consider colorful generalized associahedra as the natural geometric locus for labeled seeds. In the same spirit, Ishibashi and Kano \cite{Ishibashi2024} advocate for a labeled approach with the notion of ``labeled exchange graph''.
\end{remark}

All combined, our nonstandard approach culminates in the following main result:
\begin{theorem}
    Okamoto's transformation $w_2$ lifts to the representation space as the entry-wise action of mutation formula
    \begin{equation*}
        \begin{matrix} \mu_{w_2} & : & \pazocal{R}(\iota) & \longrightarrow & \pazocal{R}\left(w_2(\iota)\right)
        \end{matrix}
    \end{equation*}
    which admits dual geometric characterizations as the $\pi$-rotation on colored triangulations of the hexagon and the inside-out on star-shaped fat graphs of the four-punctured Riemann sphere.
\end{theorem}

This solves the monodromic facet of \Cref{pb1}. As it happens, the remaining of the diagram naturally unfolds when switching to the language of middle convolutions.

In terms of Fuchsian isomonodromy representations \eqref{Sysspace}, all generators but $w_2$ are understood as elementary gauge transformations in $\Gamma$. The analogue of $s_2$ on space $\pazocal{F}(\theta)$ has been originally unraveled by Filipuk \cite{Filipuk2006} through the additive \textbf{middle convolution} $mc_\xi$.
This functorial operation is tailored so to map $\mathbf{A}:=(A_1,A_2,A_3)\in \mathfrak{sl}_2(\mathbb{C})^{\times3}$ to a new triple of same dimension but with \emph{shifted} defining parameters $\theta$ exactly à la $w_2$.
In turn, formula \eqref{tastemut} itself finds theoretical genesis in convolutional terms.
The middle convolution toolkit has a multiplicative analogue $\MC_{\nu}$, which allows to extend Filipuk's result from the framework of Fuchsian systems to that of monodromy groups: the functor maps between triples $\mathbf{M}:=(M_1,M_2,M_3)\in SL_2(\mathbb{C})^{\times3}$ by \emph{scaling} the spectral parameters $\iota$ as commanded by $w_2$. Performing such operation on the right basis indeed recovers the whole entry-wise action of \eqref{tastemut}.

With both realizations of $w_2$ speaking the same convolutional language, correspondence \eqref{MCRH} at the core of the theory ensures they precisely commute with the Riemann-Hilbert map: for the compatible specializations of the convolutional parameters
\begin{equation*}
    \nu=e^{2\pi i \vartheta},\quad \vartheta:=-\frac{\theta_1+\theta_2+\theta_3+\theta_\infty}{2},
\end{equation*}
we finally obtain the commutative square
    \begin{equation}\label{pb1sol}
        \begin{tikzcd}[row sep=large, column sep=5em]
    \pazocal{F}(\theta) 
        \arrow[d,swap,"\mathrm{RH}"]
      \arrow[r,"mc_\vartheta"]
    & \pazocal{F}\left(w_2(\theta)\right)
      \arrow[d,swap,"\mathrm{RH}"']
    \\
     \pazocal{R}(\iota) \arrow[r,"\MC_{e^{2\pi i \vartheta}}"] & \pazocal{R}\left(w_2(\iota)\right)
    \end{tikzcd}
    \end{equation}
solving \Cref{pb1} in its entirety.

This does not mark the end of the story though, since a realization of $w_2$ parallel to Filipuk's is available for the alternative Birkhoff framework of $P_{V\!I}$.

Via Harnad's \textbf{duality} $\mathfrak{H}^{\!\vee}$ \cite{Harnad1994}, the sixth Painlevé equation indeed admits an equivalent Birkhoff isomonodromy representation
\begin{equation}\label{irr}
\begin{aligned}
    \pazocal{B}(\theta)=\left\{ {\mathrm d}-\left(U+\frac{V-\One}{z}\right){{\mathrm d}z} \,\Big|\,U=\mathrm{diag}(0,1,t);\, V_{kk}=-\theta_k,\,\mathrm{eig}(V)=\Big\{0,\textstyle\frac{\pm\theta_\infty-\theta_1-\theta_2-\theta_3}{2}\Big\}\right\},
\end{aligned}
\end{equation}
whose generalized monodromy data, singled out by a genuine monodromy matrix at $0$ and a pair of Stokes ones at $\infty$ \cite{Mazzocco2002}, are characterized by triples
\begin{equation}\label{eq:irrdata}\begin{aligned}
\Big\{(M_0,S_1,S_2) \in GL_3(\mathbb C)\times B_+^{(1)}\times  B_-\,\big|\,& M_0S_1S_2=\One,\\&\mathrm{eig}(M_0)=\big\{1,e^{\pi i(\theta_\infty-\theta_1-\theta_2-\theta_3)},e^{-\pi i(\theta_\infty+\theta_1+\theta_2+\theta_3)}\big\},\\
&\mathrm{eig}(S_2)=\big\{e^{2\pi i\theta_1},e^{2\pi i\theta_2},e^{2\pi i\theta_3}\big\}\Big\},
\end{aligned}
\end{equation}
for the Borel subgroups $B_+^{(1)}$ of upper unitriangular matrices and $B_-$ of lower triangular ones.
Mazzocco \cite{Mazzocco2004} showed that, in this representation, the \emph{whole} group $W(\tilde{D}_4)$ of symmetries sits inside the gauge one $\Gamma$.

\begin{problem}\label{pb2}
    Develop a unified description for $w_2$ on connection and representation spaces that encompasses both the Birkhoff and Fuchsian frameworks in a Riemann-Hilbert compatible way.
\end{problem}

Again, the crucial ingredient to attack the problem is provided by the theory developed in \cite{DalMartello2024}, and the solution manifests naturally in the middle convolution language. 

Besides the Teichmüller machinery delivering the crucial $\pazocal{X}$-coordination \eqref{Ms}, \cite{DalMartello2024} introduces the \textbf{GDAHA functor} $\mathscr{F}_q$ as a map between representation categories of generalized double affine Hecke algebras of uniform length (cf. \Cref{rmk:MCq}).
Taking the $q\rightarrow1$ classical limit, the functor simplifies to a two-step operation between matrix tuples: a multiplicative middle convolution $\MC$ followed by a classical result we refer to as the \textbf{Killing factorization}.
Writing $\mathfrak{H}^{\!\vee}$ in convolutional language, Fuchsian and Birkhoff formulations turn equivalent also in terms of (generalized) monodromy, with the identification provided precisely by $\mathscr{F}_1$: for the middle convolutions acting on preconditioned triples $\widehat{\mathbf{A}}$ and $\widehat{\mathbf{M}}$ as is customary in the theory,
\begin{equation}\label{square}
        \begin{tikzcd}[row sep=large, column sep=5em]
    \frac{\mathrm{d}}{\mathrm{d}\lambda}\Phi=\big(\sum^3_{k=1}\frac{{\widehat{A}}_k}{\lambda-u_k}\big)\Phi 
        \arrow[d,swap,"\mathrm{RH}"]
      \arrow[r,"\mathfrak{H}^{\!\vee}"]
    & \frac{\mathrm{d}}{\mathrm{d}z}Y=\left(U+\frac{V-\One}{z}\right)Y
      \arrow[d,swap,"\mathrm{RH}"']
    \\
     (\widehat{M}_1,\widehat{M}_2,\widehat{M}_3) \arrow[r,"\mathscr{F}_1"] & (S_1,S_2)
    \end{tikzcd}
    \end{equation}
commutes and reads as an holistic definition of $P_{V\!I}(\theta)$ through its pair of convolutional-compatible isomonodromy representations.

Thanks to this identification, we solve \Cref{pb2} by framing $w_2$ as a $4$-tuple of maps between two such \textbf{Painlevé squares} \eqref{square} at different values of the parameters, encompassing the differential results of Filipuk and Mazzocco in a unified cube-shaped diagram (\Cref{fig:cube}). Among the four, it is indeed our cluster monodromic map that showcases the deepest and most ramified connections.   

The present paper is organized as follows.

\Cref{sec:middleconv} recaps basic definitions and properties of both versions of the middle convolution and their corresponding preconditioners, specifying the adopted notation.

The main \Cref{sec:Okamoto} is dedicated to the cluster monodromic realization of Okamoto's birational transformation and its combinatorial features. We introduce such $w_2$ via the multiplicative middle convolution, and detail first its simpler characterization on triangulations.

\Cref{sec:cube} frames this new realization in the context of the Painlevé VI duality, relying on the diagrammatic language. By providing the computational details, it serves as a technical companion to Appendix B of \cite{DalMartello2024} as was there anticipated.

Finally, \Cref{app:X&A} gives a minimal primer on cluster ensembles.

\paragraph{Acknowledgments}
\!\!\!The author is grateful to Marta Mazzocco for suggesting this line of research and participating in many insightful discussions. This project was funded by the Engineering and Physical Sciences Research Council [2438494]; the Japan Society for the Promotion of Science [PE24720]; and Fondazione Cariparo [C93C22008360007] through the University of Padua.

\section{Middle convolution and generalizations}\label{sec:middleconv}

This preliminary section defines the middle convolution in both its additive and multiplicative versions, together with its preconditioning addition functor. En passant, relevant generalizations of these operations are briefly discussed in the form of remarks.

\subsection{Multiplicative version}\label{sec:multconv}

Katz \cite{Katz1996} introduced the middle convolution functor to prove an existence theorem for irreducible rigid local systems.
Any such system was shown to originate from the trivial one $\mathrm{d}\phi=0$ by applying invertible sequences of preconditioned middle convolutions, leading simultaneously to a classification and an existence algorithm.
The functor preserves important properties like the index of
rigidity and irreducibility, but in general changes the rank and the monodromy group.

Following \cite{Dettweiler2007}, we give a purely algebraic analogue, that reproduces the functor's core properties, as the endofunctor
\begin{equation}
    \MC_{\nu} : \mathsf{Mod}(\mathbb{C}[F_p]) \longrightarrow \mathsf{Mod}(\mathbb{C}[F_p]),
\end{equation}
where $\nu\in\mathbb{C}^*$ and $\mathsf{Mod}(\mathbb{C}[F_p])$ is the category of finite-dimensional (left) $\mathbb{C}[F_p]$-modules, $F_p$ denoting the free group on $p$ generators. More transparently, objects in $\mathsf{Mod}(\mathbb{C}[F_p])$ can be viewed as couples $(\mathbf{M},V)$,
\begin{equation*}
    \mathbf{M}=(M_1,M_2,\ldots,M_p)\in GL(V)^p,
\end{equation*}
where each matrix represents the action of the respective generator on the vector space $V$.
We can thus detail the functor as a map $(\mathbf{M},V)\longmapsto(\widetilde{\mathbf{N}},W)$, $\widetilde{\mathbf{N}}\in GL(W)^p$, between $p$-tuples of matrices---in particular, monodromy ones.
  
The intermediate object $(C_{\nu}(\mathbf{M}),V^p)\in  \mathsf{Mod}(\mathbb{C}[F_p])$,
\begin{equation*}
    C_{\nu}(\mathbf{M})=(N_1,\ldots,N_p)\in GL(V^p)^p,
\end{equation*}
is first defined by formulae 
\begin{equation}
N_i = \begin{pmatrix}
                  1 & 0 &  & \ldots& & \\
                  0 & \ddots &  & & &\\
                   \vdots & & 1 &&&\\
                \nu(M_1-\One) & \ldots&  \nu(M_{i-1}-\One)  & \nu M_{i} & M_{i+1}-\One & \ldots 
&   M_p-\One \\
     &  && & 1 & & \vdots \\
              &   &  & &  & \ddots  &  0 \\             
                    &  &  & \ldots& & 0 & 1
          \end{pmatrix}.
\end{equation}
In order to preserve rigidity and irreducibility, the middle convolution is then obtained as the restriction of this enlarged tuple on the quotient $V^p/(\pazocal{K}+\pazocal{L})$, where
\begin{equation*}
\pazocal{K}:=\bigoplus_{i=1}^p\pazocal{K}_i, \qquad
    \pazocal{K}_i = \left( \begin{array}{c}
          0 \\
          \vdots \\
          0 \\
          \ker(M_i-\One) \\
          0 \\
          \vdots\\
          0
        \end{array} \right)  \quad \mbox{($i$-th entry)},
\end{equation*}
and 
\[
\pazocal{L}=\bigcap_{i=1}^p \ker (N_i-\One)=\mathrm{ker}(N_1\cdots N_p - \One)
\]
are $\langle N_1,\ldots,N_p \rangle$-invariant 
subspaces of $V^p$.
\begin{definition}
The object
$(C_{\nu}(\mathbf{M}),V^p)$ is the \textbf{convolution} of $\mathbf{M}$.
The object $(\MC_{\nu}(\mathbf{M})$,$V^p/ (\pazocal{K}+\pazocal{L}))$ is the \textbf{middle convolution} of $\mathbf{M}$, where the matrix tuple $$\MC_{\nu}(\mathbf{M}):=\left(\widetilde{N}_1,\dots,\widetilde{N}_p\right)\in GL(V^p/(\pazocal{K}+\pazocal{L}))^p$$ has each $\widetilde{N}_k$ induced by the action of the corresponding element of $C_{\nu}(\mathbf{M})$ on the quotient.
\end{definition}
Among its many properties, the functor is multiplicative \cite[Theorem 2.4]{Dettweiler2007} and, for $\nu\neq1$, $\pazocal{K}+\pazocal{L}=\pazocal{K}\oplus\pazocal{L}$ via the explicit characterization
\begin{equation*}
\pazocal{L}=\Bigg\langle \ \left( \begin{array}{c}
          M_{2} \cdots M_p\mathbf{v} \\
          M_{3} \cdots M_p\mathbf{v} \\
          \vdots \\
          \mathbf{v}
        \end{array} \right)
        \ \Bigg| \ \mathbf{v} \in \mathrm{ker}(\nu\,M_1\cdots M_p-\One) \ \Bigg\rangle.
\end{equation*}
\begin{remark}\label{rmk:MCq}
    The quantum analogue of the multiplicative middle convolution was introduced in \cite{DalMartello2024}, for the \textbf{parameter-free} formulation $\MC$ whose definition excludes the subspace $\pazocal{L}$, and denoted by $\mathscr{M}_q$. 
    Postcomposed with (quantum) Killing factorization, $\mathscr{M}_q$ proves that quantized Stokes data form a representation for the generalized double affine Hecke algebra (GDAHA) of type $\tilde{E}_6$.
    This gives the irregular setting a quantum mirror of the Fuchsian one, since the monodromy group was known to quantize as a representation of the GDAHA of type $\tilde{D}_4$ \cite{Mazzocco2018}.
    Denoting by $H_{\mathscr{D}}$ the GDAHA of type $\mathscr{D}$, we can visualize the whole quantization as follows:
    \begin{equation}\label{finaldiagram}
    \begin{tikzcd}[row sep=large, column sep=7em]
     \arrow[rr,bend right=-15,dashed,"\mathscr{F}_1"']
     (\widehat{M}_1,\widehat{M}_2,\widehat{M}_3) \arrow[r,"\MC"'] & (R_1,R_2,R_3) \arrow[r,"R_1R_2R_3=S_1S_2"'] & (S_1,S_2)\\
  \arrow[rr,bend right=15,2blue,dashed,"\mathscr{F}_q"]
 \text{\color{2blue}$(\widehat{M}^q_1,\widehat{M}^q_2,\widehat{M}^q_3)$}  \arrow[r,2blue,"\mathscr{M}_q"] \arrow[u,swap,"q\rightarrow1"] & \text{\color{2blue}$(R^q_1,R^q_2,R^q_3)$}\arrow[r,2blue,"R^q_1R^q_2R^q_3=S^q_1S^q_2"] \arrow[u,swap,"q\rightarrow1"] & \text{\color{2blue}$(S_1^q,S_2^q)$} \arrow[u,swap,"q\rightarrow1"]\\[-3.2em]
 \color{mygray}\rotatebox[origin=c]{-90}{$\in$} & & \color{mygray}\rotatebox[origin=c]{-90}{$\in$}\\[-3.25em]
\color{mygray}\scalebox{0.75}{$\mathrm{Rep}\big(H_{\tilde{D}_4}\big)$} & & \color{mygray}\scalebox{0.75}{$\mathrm{Rep}\big(H_{\tilde{E}_6}\big)$}
\end{tikzcd}
\end{equation}
We refer to \cite[Appendix B]{DalMartello2024} for further insight.
\end{remark}

\subsection{Additive version}

The functor admits a parallel version for Fuchsian systems \cite{Dettweiler2007}, the two related by a Riemann-Hilbert correspondence: the latter becomes the former when passing to monodromy.
Essentially, the operation is the additive analogue: for $\xi\in\mathbb{C}$, $i=1,\ldots,p$ and $\mathbf{A}=(A_1,\ldots,A_p)$ the tuple of $n \times n$ finite matrix residua of a Fuchsian system \eqref{Fuchn},  construct first the block matrices  
\[
B_i:= \left( \begin{array}{ccccccc}
                   0 & \ldots  &  & & & \\
                   \vdots & \ddots &  & & &\\
               A_1 & \ldots & A_{i-1}&  A_i+ \xi\One  &  A_{i+1} & \ldots 
&  A_p \\
               &   &  &   & \ddots & \vdots  \\             
                   &    & & & \ldots & 0
          \end{array} \right) \in \mathbb{C}^{np \times np},
\]
each one vanishing outside the corresponding $i$-th block row.
Then, take the quotient over the two $\langle B_1,\ldots,B_p \rangle$-invariant 
subspaces of $\mathbb{C}^{np}$:
\begin{align*}
    &{\mathscr{K}}=\bigoplus_{i=1}^p {\mathscr{K}}_i, \qquad
    {\mathscr{K}}_i = \left( \begin{array}{c}
          0 \\
          \vdots \\
            0\\
          \ker(A_i) \\
           0 \\
           \vdots \\
           0 
        \end{array} \right) \quad \mbox{($i$-th entry)},\\[1em]
        &{\mathscr{L}}=\bigcap_{i=1}^p \mathrm{ker}(B_i)=\mathrm{ker}(B_1+\ldots+B_p).
\end{align*}
\begin{definition}
The matrix tuple $c_{\xi}(\mathbf{A}):=(B_1,\dots,B_p)$ is the \textbf{(additive)
 convolution}
of $\mathbf{A}$. The matrix tuple 
$mc_{\xi}:=(\widetilde{B}_1,\dots,\widetilde{B}_p)\in \mathbb{C}^{m\times m}$ is the
\textbf{(additive) middle convolution} of $\mathbf{A}$,
where each $\widetilde{B}_i$ is induced by the action of the corresponding element of $c_{\xi}(\mathbf{A})$ on 
$\mathbb{C}^m \simeq \mathbb{C}^{np}/({\mathscr{K}}+{\mathscr{L}})$.
\end{definition}
Mirroring the multiplicative case, $\xi\neq0$ yields $\mathscr{K}+\mathscr{L}=\mathscr{K}\oplus\mathscr{L}$ via the explicit characterization
\begin{equation*}
\mathscr{L}=\Bigg\langle \ \left( \begin{array}{c}
        \mathbf{v} \\
        \vdots \\
        \mathbf{v}
        \end{array} \right)
        \ \Bigg| \ \mathbf{v} \in \ker(A_1 + \cdots + A_p + \xi\One) \ \Bigg\rangle.
\end{equation*}

Denoting as $\mathrm{Fu}_\mathbf{A}$ the $n$-dimensional Fuchsian system \eqref{Fuchn} defined by $\mathbf{A}=(A_1,\ldots,A_p) \in (\mathbb{C}^{n \times n})^p$, the functors' Riemann-Hilbert correspondence is stated \cite[Theorem  4.7]{Dettweiler2007} in the following
\begin{theorem}\label{thm:MCcorr}
Let $\mathbf{M}:= \mathrm{Mon}(\mathrm{Fu}_\mathbf{A})=(M_1,\ldots,M_p)\in GL_n(\mathbb{C})^p$ be the tuple of monodromy generators for $\mathrm{Fu}_\mathbf{A}$, $\xi \in \mathbb{C}\backslash\mathbb{Z}$ and $\nu=e^{2\pi i\xi}$. If $\mathbf{M}$ generates an irreducible subgroup of $GL(V)$ for at least two $M_i$s different from the identity and
\begin{equation}
\begin{aligned}
    \mathrm{rk}(A_i)&=\mathrm{rk}(M_i-\One),\\
    \mathrm{rk}(A_1+\ldots+A_p+\xi\One)&=\mathrm{rk}(\nu M_1\cdots M_p-\One),
\end{aligned}
\end{equation}
then
\begin{equation}\label{MCRH}
    \mathrm{Mon}\big(\mathrm{Fu}_{mc_{\xi-1}(\mathbf{A})}\big)=\MC_{\nu}(\mathbf{M}).
\end{equation}
\end{theorem}
This correspondence, proving that $\MC_{e^{2\pi i \xi}}$ is \emph{the} map between the monodromy matrices of the respective Fuchsian systems mapped by $mc_\xi$, is a foundational ingredient of the theory. E.g., it drives a general scheme to produce constructive solutions to
the Riemann–Hilbert problem \cite{Biblio2017}.
For us, it ensures that diagram \eqref{pb1sol} commutes.

When it comes to applications, $mc_\xi$ is tailored by tweaking the invariant subspaces: $\mathscr{L}$ via the value of $\xi$, $\mathscr{K}$ via the preconditioning \textbf{addition} functor
\begin{equation}\label{addfunctor}
    ad_{\boldsymbol\upsilon}:(A_1,\ldots,A_p) \longmapsto (\widehat{A}_1,\ldots,\widehat{A}_p)=(A_1+\upsilon_1\One,\ldots,A_p+\upsilon_p\One)
\end{equation}
depending on a vector parameter $\boldsymbol\upsilon=(\upsilon_1,\ldots,\upsilon_p)$.
The very same tailoring can be done for the multiplicative case, where the preconditioner translates to monodromy via the Riemann-Hilbert correspondence
\begin{equation}\label{RHprecond}
    ad_{\boldsymbol\upsilon} \ \longleftrightarrow \ AD_{\boldsymbol\tau},
\end{equation}
for
\begin{equation}\label{AD}
    AD_{\boldsymbol\tau}: (M_1,\ldots,M_p) \longmapsto (\widehat{M}_1,\ldots,\widehat{M}_p)=(\tau_1M_1,\ldots,\tau_pM_p)
\end{equation}
whose vector parameter ${\boldsymbol\tau}=(\tau_1,\ldots,\tau_p)$ matches $e^{2\pi i \boldsymbol\upsilon}\!:=(e^{2\pi i\upsilon_1},\ldots,e^{2\pi i\upsilon_p})$.
Thus, the functor $mc_\xi\circ ad_{\boldsymbol\upsilon}$ maps between Fuchsian systems of tailored dimensions, and the Riemann-Hilbert correspondences ensure $\MC_{e^{2\pi i \xi}}\circ AD_{e^{2\pi i \boldsymbol\upsilon}}$ maps between the respective monodromy data. 

\begin{remark}\label{Takemura}
    The additive middle convolution has been extended \cite{Takemura2012} to encompass irregular systems in form
    \begin{equation*}
        \frac{\mathrm{d}}{\mathrm{d}z}Y=\bigg(-\sum_{j=1}^{m_0}A_j^{(0)}z^{j-1}+\sum_{i=1}^p\sum_{j=0}^{m_i}\frac{A_j^{(i)}}{(z-u_i)^{j+1}}\bigg)Y.
    \end{equation*}
    On Fuchsian systems \eqref{Fuchn}, namely the case $m_i=0$, $i=0,\ldots,p$, it reduces to the standard middle convolution. This prevents from phrasing diagram \ref{fig:square} in purely convolutional terms, but invites to look for an Euler-integral origin of the Laplace transform itself and develop a unified framework.
    On Birkhoff systems \eqref{irr}, this extension gives a true generalization when paired with the preconditioner
    \begin{equation*}
        ad_{(\upsilon_1,\upsilon_0)}\,:\,(U,V) \longmapsto (U+\upsilon_1\One,V+\upsilon_0\One),
    \end{equation*}
    whose differential counterpart is the gauge transformation $Y\mapsto e^{z\upsilon_1}z^{\upsilon_0}Y$. On Stokes data, the shift on $U$ is undetected and this generalized addition reads as
    \begin{equation}
        (M_0,S_1,S_2) \longmapsto (\tau_0M_0,S_1,\tau_0^{-1}S_2),
    \end{equation}
    where $\tau_0:=e^{2\pi i\upsilon_0}$.
\end{remark}

\section{Main result}\label{sec:Okamoto}

In \cite{Filipuk2007}, Filipuk and Haraoka computed an additive convolutional realization of Okamoto’s $w_2$ for $P_{V\!I}$. The recipe requires a parametrization of the Fuchsian connection \eqref{Sysspace} with rank $1$ residua, readily achieved via addition with vector parameter $\frac{\boldsymbol\theta}{2}:=\left(\frac{\theta_1}{2},\frac{\theta_2}{2},\frac{\theta_3}{2}\right)$: indeed,   
\begin{equation}\label{hatA}
    \widehat{\mathbf{A}}:=(\widehat{A}_1,\widehat{A}_2,\widehat{A}_3)=ad_{\frac{\boldsymbol{\theta}}{2}}(\mathbf{A})
\end{equation}
has spectra
\begin{equation}
    \mathrm{eig}(\widehat{A}_k)=\left\{0,\theta_k\right\}.
\end{equation}
Notice that addition induces the map
\begin{equation}
    A_\infty\longmapsto\widehat{A}_\infty:=A_\infty-\frac{\theta_1+\theta_2+\theta_3}{2}\One.
\end{equation}

 Specializing $\xi\in\mathrm{eig}(\widehat{A}_\infty)$, the computation's core step $mc_\xi(\widehat{\mathbf{A}})$ results in a triple of $2\times2$ matrices with spectra $\{0,\theta_k+\xi\}$ \emph{shifted} precisely à la $w_2$---and whose isomonodromic deformation gives a solution to $P_{V\!I}$ that matches B\"acklund transformation $s_2$.
 
 This section gives this realization a multiplicative version: for $\widehat{\mathbf{M}}=AD_{e^{\pi i \boldsymbol{\theta}}}(\mathbf{M})$ and $\nu\in\mathrm{eig}(\widehat{M}_\infty)$, $\MC_{\nu}(\widehat{\mathbf{M}})$ results in a triple of $2\times2$ matrices with corresponding \emph{rescaled} spectra $\{1,\nu\cdot e^{2\pi i \theta_k}\}$. This multiplicative approach, run explicitly over cluster coordinates, has a crucial advantage: choosing a unique basis for the convolutional machinery allows to interpret Okamoto's symmetry beyond a change of parameters via a special sequence of cluster mutations endowed with remarkable combinatorial features.

\subsection{$\pazocal{X}$-coordinatization of the monodromy group}\label{sec:coordinatization}

The starting point for the convolutional explicit computation is a coordinatization of the Fuchsian monodromy group \eqref{Fuchdata}. This is a rather nontrivial ingredient of our construction, that we borrow from the higher Teichm\"uller machinery developed in \cite{DalMartello2024}---which, for our $2$-dimensional setting, in fact produces coordinates for the classical Teichm\"uller space as the group datum is set to the standard $PSL_2$.

For a selected hyperbolic surface (triangulated by its fat graph) and rank $n$ of the (classical) theory, the recipe sends a basis of loops to a tuple of $SL_n(\pazocal{X}_{\pazocal{Q}})$-matrices over a so-called $\pazocal{X}$-torus, whose formal definition is given in \Cref{app:X&A}. It is a split algebraic torus $\pazocal{X}_{\pazocal{Q}}$ endowed with a Poisson structure on cluster $\pazocal{X}$-coordinates $X_i$ such that $\{X_i,X_j\}=\varepsilon_{ij}X_iX_j$, for $(\varepsilon_{ij})$ the exchange matrix of a quiver $\pazocal{Q}$.
By choosing $n=2$ and the four-punctured Riemann sphere $\Sigma_{0,4}$, namely the domain of the Fuchsian system \eqref{Sysspace}, we obtain the following coordinatization under the choice of loops in \Cref{fig:2dimloops}:

\begin{theorem}\label{thm:D4-emb}
Let $\triangledown$ denote the quiver in \Cref{fig:2amalquiver} and $\{Z_{O1},Z_{O2},Z_{B1},Z_{B2},Z_{G1},Z_{G2}\}$ be the set of cluster coordinates for $\pazocal{X}_\triangledown$. Then, within $\mathrm{Mat}_2(\mathbb{C}[\pazocal{X}_\triangledown])$ the matrices
\begin{equation}\label{OBGP}
\begin{aligned}
    &\bar{O}=\begin{pmatrix}
        0 & Z_{O1}^{\nicefrac{-1}{2}}Z_{O2}^{-1}\\[.5em]
        -Z_{O1}^{\nicefrac{1}{2}}Z_{O2} & Z_{O1}^{\nicefrac{1}{2}}+Z_{O1}^{\nicefrac{-1}{2}}
    \end{pmatrix},\\
    &\bar{B}=\begin{pmatrix}
        Z_{B1}^{\nicefrac{1}{2}}+Z_{B1}^{\nicefrac{-1}{2}}+Z_{B1}^{\nicefrac{-1}{2}}Z_{B2}^{-1} & Z_{B1}^{\nicefrac{1}{2}}+Z_{B1}^{\nicefrac{-1}{2}}+Z_{B1}^{\nicefrac{-1}{2}}Z_{B2}^{-1}+Z_{B1}^{\nicefrac{1}{2}}Z_{B2}\\[.5em]
        -Z_{B1}^{\nicefrac{-1}{2}}Z_{B2}^{-1} & -Z_{B1}^{\nicefrac{-1}{2}}Z_{B2}^{-1}
    \end{pmatrix},\\
    &\bar{G}=\begin{pmatrix}
        Z_{G1}^{\nicefrac{1}{2}}+Z_{G1}^{\nicefrac{-1}{2}}+Z_{
        G1}^{\nicefrac{1}{2}}Z_{G2} & Z_{
        G1}^{\nicefrac{1}{2}}Z_{G2}\\[.5em]
        -Z_{G1}^{\nicefrac{1}{2}}-Z_{G1}^{\nicefrac{-1}{2}}-Z_{
        G1}^{\nicefrac{-1}{2}}Z_{G2}^{-1}-Z_{
        G1}^{\nicefrac{1}{2}}Z_{G2} & -Z_{
        G1}^{\nicefrac{1}{2}}Z_{G2}
    \end{pmatrix},\\
    &\bar{P}=\begin{pmatrix}
        Z_{O1}^{\nicefrac{1}{2}}Z_{B1}^{\nicefrac{1}{2}}Z_{G1}^{\nicefrac{1}{2}}Z_{O2}Z_{B2}Z_{G2} & 0\\
        -z & Z_{O1}^{\nicefrac{-1}{2}}Z_{B1}^{\nicefrac{-1}{2}}Z_{G1}^{\nicefrac{-1}{2}}Z_{O2}^{-1}Z_{B2}^{-1}Z_{G2}^{-1}
    \end{pmatrix}=\left(\bar{O}\,\bar{B}\,\bar{G}\right)^{-1},
\end{aligned}
\end{equation}
with
\begin{equation}\begin{aligned}
    z=&\,(Z_{O1}^{\nicefrac{1}{2}}-Z_{O1}^{\nicefrac{-1}{2}})Z_{B1}^{\nicefrac{-1}{2}}Z_{G1}^{\nicefrac{-1}{2}}Z_{B2}^{-1}Z_{G2}^{-1}+(Z_{B1}^{\nicefrac{1}{2}}-Z_{B1}^{\nicefrac{-1}{2}})Z_{G1}^{\nicefrac{-1}{2}}Z_{O1}^{\nicefrac{-1}{2}}Z_{O2}Z_{G2}^{-1}\\ &+(Z_{G1}^{\nicefrac{1}{2}}-Z_{G1}^{\nicefrac{-1}{2}})Z_{O1}^{\nicefrac{-1}{2}}Z_{B1}^{\nicefrac{-1}{2}}Z_{O2}Z_{B2}+Z_{O1}^{\nicefrac{1}{2}}Z_{B1}^{\nicefrac{1}{2}}Z_{G1}^{\nicefrac{1}{2}}Z_{O2}Z_{B2}Z_{G2}\\&+Z_{O1}^{\nicefrac{1}{2}}Z_{B1}^{\nicefrac{1}{2}}Z_{G1}^{\nicefrac{-1}{2}}Z_{O2}Z_{B2}Z_{G2}^{-1}+Z_{O1}^{\nicefrac{1}{2}}Z_{B1}^{\nicefrac{-1}{2}}Z_{G1}^{\nicefrac{-1}{2}}Z_{O2}Z_{B2}^{-1}Z_{G2}^{-1},
\end{aligned}
\end{equation}
satisfy the relations
\begin{equation}
\begin{aligned}
    \left(\bar{O}-Z^{\nicefrac{1}{2}}_{O1}\One\right)\left(\bar{O}-Z^{\nicefrac{-1}{2}}_{O1}\One\right)&=\Zero,\\
    \left(\bar{B}-Z^{\nicefrac{1}{2}}_{B1}\One\right)\left(\bar{B}-Z^{\nicefrac{-1}{2}}_{B1}\One\right)&=\Zero,\\
    \left(\bar{G}-Z^{\nicefrac{1}{2}}_{G1}\One\right)\left(\bar{G}-Z^{\nicefrac{-1}{2}}_{G1}\One\right)&=\Zero,\\
    \left(\bar{P}-Z_{O1}^{\nicefrac{1}{2}}Z_{B1}^{\nicefrac{1}{2}}Z_{G1}^{\nicefrac{1}{2}}Z_{O2}Z_{B2}Z_{G2}\One\right)\left(\bar{P}-Z_{O1}^{\nicefrac{-1}{2}}Z_{B1}^{\nicefrac{-1}{2}}Z_{G1}^{\nicefrac{-1}{2}}Z_{O2}^{-1}Z_{B2}^{-1}Z_{G2}^{-1}\One\right)&=\Zero,\\
        \bar{O}\,\bar{B}\,\bar{G}\,\bar{P}&=\One.
\end{aligned}
\end{equation}
Moreover, up to global conjugation and cyclic permutation, complex values of the cluster coordinates exist for any irreducible element of \eqref{Fuchdata} to satisfy \eqref{OBGP} as
\begin{equation}\label{MiasOBG}
    M_1 = \bar{O}, \quad M_2 = \bar{B}, \quad M_3 = \bar{G}, \quad M_\infty = \bar{P}. 
\end{equation} 
\end{theorem}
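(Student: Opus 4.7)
The statement naturally splits into two parts: a set of algebraic identities over $\mathbb{C}[\pazocal{X}_\triangledown]$ (the four Cayley--Hamilton-type relations and the unital product $\overline{O}\,\overline{B}\,\overline{G}\,\overline{P}=\One$), and a surjectivity claim over $\mathbb{C}$ (that every irreducible monodromy tuple is realized for some choice of complex cluster parameters). The first part is bookkeeping in $2\times 2$ matrix algebra. For each $X\in\{O,B,G\}$ a direct computation would give $\mathrm{tr}(\overline{X})=Z_{X1}^{\nicefrac{1}{2}}+Z_{X1}^{\nicefrac{-1}{2}}$ and $\det(\overline{X})=1$, so Cayley--Hamilton delivers the first three displayed relations at once.

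For the product relation I would exploit the sparsity of the factors. The first row of $\overline{O}$ is supported in column $2$; the second row of $\overline{B}$ is a scalar multiple of $(1,1)$; and the sum of the two rows of $\overline{G}$ is supported in column $1$. Chaining these observations row by row shows that the $(1,2)$-entry of $\overline{O}\,\overline{B}\,\overline{G}$ vanishes, so the product is lower triangular. Its diagonal entries are then forced by $\det(\overline{O}\,\overline{B}\,\overline{G})=1$ to coincide with the diagonal of the prescribed $\overline{P}^{-1}$, and the remaining $(2,1)$-entry, after regrouping telescoping terms, reproduces the displayed formula for $z$. Cayley--Hamilton applied to the triangular matrix $\overline{P}$ and the definitional identity $\overline{O}\,\overline{B}\,\overline{G}\,\overline{P}=\One$ then close out the algebraic part.

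The real content lies in the existence statement. A preliminary parameter count is encouraging: the moduli of irreducible triples $(M_1,M_2,M_3)\in SL_2(\complex)^3$ with fixed eigenvalues modulo $PGL_2$ is $3$-dimensional, matching the three free cluster variables $Z_{O2},Z_{B2},Z_{G2}$ once $Z_{X1}$ is prescribed by the spectral condition $Z_{X1}^{\nicefrac{\pm1}{2}}=\iota_X^{\pm1}$. My plan would be to use the $PGL_2$-action to conjugate $M_1$ into the shape $\overline{O}$, which uniquely pins down $Z_{O2}$ from the matrix entries (up to the discrete ambiguity $Z_{O1}^{\nicefrac{1}{2}}\leftrightarrow Z_{O1}^{\nicefrac{-1}{2}}$), and then to read off $Z_{B2},Z_{G2}$ from the equations $M_2=\overline{B}$ and $M_3=\overline{G}$. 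Cyclic permutation of the triple takes care of the cases where $M_1$ itself is not cyclic but some $M_i$ is, and irreducibility of the tuple guarantees both that at least one such $i$ exists and that no denominator encountered in the normalization vanishes.

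The main obstacle is precisely this final step: verifying that the failure locus of the normal form coincides exactly with the reducible locus, and that no degenerate boundary stratum of the $\pazocal{X}$-torus is hit along the way. This amounts to tracking the explicit higher Teichm\"uller map of \cite{DalMartello2024} and checking that its image covers the entire irreducible character variety---a combinatorial argument controlled by the fat graph of $\Sigma_{0,4}$ underlying the quiver $\triangledown$, which is where the full machinery of the cluster coordinatization really has to be invoked rather than just the matrix identities.
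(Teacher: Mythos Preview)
Your handling of the algebraic identities is correct and more direct than the paper's: there the relations are obtained by taking the classical limit $q\to1$ of \cite{DalMartello2024}, Theorem~12, whereas you verify them by Cayley--Hamilton together with a structured computation of $\overline{O}\,\overline{B}\,\overline{G}$. Both routes are valid; yours is self-contained.

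For the surjectivity claim the paper invokes a geodesic argument from \cite{Mazzocco2018}: the $\pazocal{X}$-coordinates are complexified shear coordinates on $\Sigma_{0,4}$, and trace functions along simple closed curves are known to parametrize the $SL_2$ character variety. Your plan is algebraic normalization instead, and here there is a step that does not close as written. Putting $M_1$ into $\overline{O}$-form does not determine the conjugator: the locus $\{g\in PGL_2:(gM_1g^{-1})_{11}=0\}$ is two-dimensional, with $Z_{O2}$ varying along it. The remaining two degrees of freedom must then be spent on the two shape constraints $(gM_2g^{-1})_{21}=(gM_2g^{-1})_{22}$ and $(gM_3g^{-1})_{12}+(gM_3g^{-1})_{22}=0$ that characterize $\overline{B}$ and $\overline{G}$. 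Your dimension count says this is a zero-dimensional problem, but that alone does not furnish existence, nor does irreducibility of the triple by itself guarantee that a solution lands in the torus (i.e.\ with all $Z_{*2}\in\mathbb{C}^\times$). This is precisely where the paper's geodesic input---or, equivalently, the higher Teichm\"uller surjectivity you gesture at in your last paragraph---does actual work. So your final sentence points to the right external result, but the normalization sketch preceding it is a heuristic rather than a reduction: it would need its own existence argument to stand independently.
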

\begin{proof}
    Take the classical limit $q\rightarrow1$ in \cite[Theorem 12]{DalMartello2024}. The parametrization property follows from a geodesic argument as in \cite[§2.3]{Mazzocco2018}. 
\end{proof}

\begin{figure}
    \centering
    \includegraphics[width=.45\textwidth]{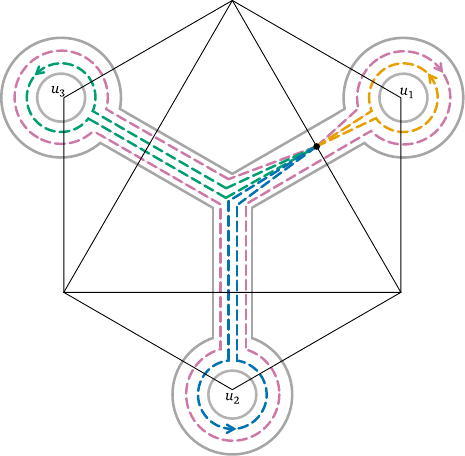}
    \caption{Basis of loops ({\color{3ochre}ochre}, {\color{3blue}blue} and {\color{3green}green}) on $\Sigma_{0,4}$ for the quadruple \eqref{OBGP}, whose matrices are named after the respective loop's color: e.g., $\bar{O}$ corresponds to the {\color{3ochre}ochre} loop encircling $u_1$, thus playing the role of $M_1$. The triangular shape mimics that of the quiver $\triangledown$, whose isolated vertices generate at punctures $u_i$ as eigenvalues of the monodromy matrices. The hexagonal triangulation results from dualizing the underlying {\color{mygray}gray} fat graph $\Gamma_{0,4}$, see \cite[§4.1]{DalMartello2024}. It is a crucial ingredient for the coordinatization, whose formulae are constructed from matricial building blocks associated to crossings of triangles as detailed in \cite[§3.2]{DalMartello2024}.}\label{fig:2dimloops}
\end{figure}

\begin{remark}
This coordinatization recovers the one constructed in \cite{Mazzocco2018} over shear coordinates, with the major advantage of having a natural $\pazocal{X}$-cluster structure. 
\end{remark}

The Poisson structure admits four evident Casimir elements, detectable from $\triangledown$'s three isolated vertices and isolated $3$-cycle (\Cref{fig:2amalquiver}).
\begin{figure}
    \centering
    \includegraphics[height=3.75cm]{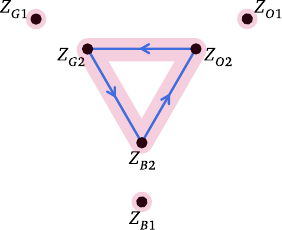}
    \caption{The triangular-shaped quiver $\triangledown$ encoding the quadratic Poisson structure, whose nonzero brackets are given by $\{Z_{B2},Z_{O2}\}=Z_{B2}Z_{O2}$ and its alphabetic cyclic permutations of subscripts. The highlighted $Z_{O1}$, $Z_{B1}$, $Z_{G1}$, and  $Z_{O2}Z_{B2}Z_{G2}$, generate the subalgebra of Casimir elements.
    }\label{fig:2amalquiver}
\end{figure}
It follows from the spectral specialization in \eqref{Repspace} that they match the $\theta$s as follows:
\begin{equation}\label{Zthetacorr}
    Z_{O1} =\iota_1^{-2}=e^{-2\pi i \theta_1},\quad
    Z_{B1} =\iota_2^{-2}=e^{-2\pi i \theta_2},\quad
    Z_{G1} =\iota_3^{-2}=e^{-2\pi i \theta_3};
\end{equation}
\begin{equation}\label{cyclecasimir}
    Z_{O2}Z_{B2}Z_{G2}=\iota_1\iota_2\iota_3\iota_\infty=e^{\pi i(\theta_\infty+\theta_1+\theta_2+\theta_3)}.
\end{equation}
In the following, we enforce all four equalities by taking the quotient of the Poisson algebra $\mathbb{C}[\pazocal{X}_\triangledown]$ with respect to the ideal generated by the corresponding Casimir evaluations. Thus, the resulting matrix quadruple simplifies over
\begin{equation}\label{simpalgebra}
    \mathbb{C}_{\iota}[\pazocal{X}_\triangledown]:=\mathbb{C}\big[ Z_{O2}^{\pm1},Z_{B2}^{\pm1},Z_{G2}^{\pm1}\big]\big/(Z_{O2}Z_{B2}Z_{G2}-\iota_1\iota_2\iota_3\iota_\infty).
\end{equation}

\subsection{Monodromic realization of $w_2$}

Having constructed a coordinatization for the monodromy matrices, we proceed with the explicit convolutional computations.

In order to mimic a symmetry, the multiplicative middle convolution must send the input triple $\mathbf{M}=(M_1,M_2,M_3)=(\bar{O},\bar{B},\bar{G})$ back to $\mathrm{Mat}_2(\mathbb{C})^{\times3}$.
This requires to maximize both invariant subspaces, achieving a $4$-dimensional sum $\pazocal{K}\oplus\pazocal{L}$. 
On the one hand, we tailor dim$(\pazocal{K})=3$ via the multiplicative preconditioner $AD_{e^{\pi i \boldsymbol\theta}}$ \eqref{AD} with vector parameter $e^{\pi i \boldsymbol\theta}=(e^{\pi i \theta_1},e^{\pi i \theta_2},e^{\pi i \theta_3})$, obtaining the following rescaled monodromy matrices over the evaluated cluster Poisson algebra $\mathbb{C}_{\iota}[\pazocal{X}_\triangledown]$:
\begin{equation}\label{rescOBG}
\begin{aligned}
    \widehat{M}_1=e^{\pi i \theta_1}\bar{O}&=\begin{pmatrix}
        0 & e^{2\pi i \theta_1}Z_{O2}^{-1}\\[.5em]
       -Z_{O2} & 1+e^{2\pi i \theta_1}
    \end{pmatrix},\\
    \widehat{M}_2=e^{\pi i \theta_2}\bar{B}&=\begin{pmatrix}
        1+e^{2\pi i \theta_2}+e^{2\pi i \theta_2}Z_{B2}^{-1} & 1+e^{2\pi i \theta_2}+e^{2\pi i \theta_2}Z_{B2}^{-1}+Z_{B2}\\[.5em]
        -e^{2\pi i \theta_2}Z_{B2}^{-1} & -e^{2\pi i \theta_2}Z_{B2}^{-1}
    \end{pmatrix},\\
    \widehat{M}_3=e^{\pi i \theta_3}\bar{G}&=\begin{pmatrix}
        1+e^{2\pi i \theta_3}+Z_{G2} & Z_{G2}\\[.5em]
        -1-e^{2\pi i \theta_3}-e^{2\pi i \theta_3}Z_{G2}^{-1}-Z_{G2} & -Z_{G2}
    \end{pmatrix},
\end{aligned}
\end{equation}
with spectra 
\[
\mathrm{eig}(\widehat{M}_k)=\{1,e^{2\pi i \theta_k}\}.
\]
On the other hand, we induce a nontrivial subspace $\pazocal{L}$ by setting $\nu=(Z_{O2}Z_{B2}Z_{G2})^{-1}$.
This is the eigenvalue $e^{-\pi i(\theta_\infty+\theta_1+\theta_2+\theta_3)}$ of $\widehat{M}_\infty=e^{-\pi i(\theta_1+\theta_2+\theta_3)}M_\infty$ for the simplest eigenvector $\mathbf{v}=(0,1)^T$ and, in Poisson terms, the cubic generator of the subalgebra of Casimir elements. In particular,
\begin{equation}    \pazocal{L}=\left\langle\begin{pmatrix}\widehat{M}_2\widehat{M}_3\mathbf{v}\\\widehat{M}_3\mathbf{v}\\\mathbf{v}\end{pmatrix}\right\rangle=\left\langle\begin{pmatrix}-Z_{B2}Z_{G2}\\0\\Z_{G2}\\-Z_{G2}\\0\\1\end{pmatrix}\right\rangle.
\end{equation} 

We then construct a change of variables by completing a basis of $\pazocal{K}\oplus\pazocal{L}$:
\begin{equation}\label{changeofvarOka}
    C_{Z}:=\begin{pmatrix}
                e^{2\pi i \theta_1}Z_{O2}^{-1} & 0 & 0 & -Z_{B2}Z_{G2} & a & b\\
                1 & 0 & 0 & 0 & 0 & 0\\
                0 & -1-e^{-2\pi i \theta_2}Z_{B2}^{-1} & 0 & Z_{G2} & 0 & 0\\
                0 & 1 & 0 & -Z_{G2} & 0 & 0\\             
                0 & 0 & -1 & 0 & 0 & 0\\
                0 & 0 & 1+e^{2\pi i \theta_3}Z_{G2}^{-1} & 1 & c & d
    \end{pmatrix},
\end{equation}
whose first three columns give a basis of $\pazocal{K}$, the fourth one generates the $1$-dimensional $\pazocal{L}$, and
\begin{equation}\label{abcd}\resizebox{\textwidth}{!}{$
    \begin{pmatrix}
        a & b \\ c & d
    \end{pmatrix}=\begin{pmatrix}
        -e^{\pi i (\theta_1+\theta_2)}Z_{B2}Z_{G2}^2(1+Z_{G2}^{-1})(1+Z_{B2}^{-1}+Z_{O2}^{-1}Z_{B2}^{-1}) & -e^{\pi i (\theta_1+\theta_2)}Z_{G2}^2(1+Z_{G2}^{-1}+Z_{O2}^{-1}Z_{G2}^{-1}) \\[.5em] e^{\pi i (\theta_1+\theta_2)}(1+Z_{B2}^{-1}+Z_{O2}^{-1}Z_{B2}^{-1}) & -e^{\pi i (\theta_1+\theta_2)}Z_{G2}(1+Z_{G2}^{-1}+Z_{O2}^{-1}Z_{G2}^{-1})
    \end{pmatrix}.$}
\end{equation}

Extracting the $2\times2$ lowest diagonal blocks in the $3$-tuple $(C_Z^{-1}N_1C_Z,C_Z^{-1}N_2C_Z,C_Z^{-1}N_3C_Z)$, for $(N_1,N_2,N_3)$ the convolution of $\widehat{\mathbf{M}}$ with parameter $Z_{O2}^{-1}Z_{B2}^{-1}Z_{G2}^{-1}$, we finally obtain
\begin{equation}\label{tildeN}\resizebox{\textwidth}{!}{$
\begin{aligned}
    \widetilde{M}_1=&\begin{pmatrix}
        0 & \frac{e^{2\pi i \theta_1}(1+Z_{G2}+Z_{O2}Z_{G2})}{Z_{O2}Z_{G2}(1+Z_{O2}+Z_{O2}Z_{B2})}\\[.5em]
       -\frac{1+Z_{O2}+Z_{O2}Z_{B2}}{Z_{B2}(1+Z_{G2}+Z_{O2}Z_{G2})} & 1+\frac{e^{2\pi i \theta_1}}{Z_{O2}Z_{B2}Z_{G2}}
    \end{pmatrix},\\[.5em]
    \widetilde{M}_2=&\begin{pmatrix}
        1+\frac{e^{2\pi i \theta_2}(1+Z_{B2})(1+Z_{G2}+Z_{O2}Z_{G2})}{Z_{O2}Z_{B2}Z_{G2}(1+Z_{B2}+Z_{B2}Z_{G2})} & \frac{(1+Z_{B2})(1+Z_{G2}+Z_{O2}Z_{G2})}{Z_{G2}}\left(\frac{e^{2\pi i \theta_2}}{Z_{O2}Z_{B2}(1+Z_{B2}+Z_{B2}Z_{G2})}+\frac{1}{1+Z_{O2}+Z_{O2}Z_{B2}}\right)\\[.5em]
        -\frac{e^{2\pi i \theta_2}(1+Z_{O2}+Z_{O2}Z_{B2})}{Z_{O2}Z_{B2}(1+Z_{B2}+Z_{B2}Z_{G2})} & -\frac{e^{2\pi i \theta_2}(1+Z_{O2}+Z_{O2}Z_{B2})}{Z_{O2}Z_{B2}(1+Z_{B2}+Z_{B2}Z_{G2})}
    \end{pmatrix},\\[.5em]
    \widetilde{M}_3=&\begin{pmatrix}
        \frac{e^{2\pi i \theta_3}}{Z_{O2}Z_{B2}Z_{G2}}+\frac{(1+Z_{G2})(1+Z_{O2}+Z_{O2}Z_{B2})}{Z_{O2}(1+Z_{B2}+Z_{B2}Z_{G2})} & \frac{1+Z_{G2}+Z_{O2}Z_{G2}}{Z_{O2}(1+Z_{B2}+Z_{B2}Z_{G2})}\\[.5em]
        -\frac{(1+Z_{G2})(1+Z_{O2}+Z_{O2}Z_{B2})\left(e^{2\pi i \theta_3}(1+Z_{B2}+Z_{B2}Z_{G2})+Z_{B2}Z_{G2}(1+Z_{G2}+Z_{O2}Z_{G2})\right)}{Z_{O2}Z_{B2}Z_{G2}(1+Z_{B2}+Z_{B2}Z_{G2})(1+Z_{G2}+Z_{O2}Z_{G2})} & -\frac{1+Z_{G2}+Z_{O2}Z_{G2}}{Z_{O2}(1+Z_{B2}+Z_{B2}Z_{G2})}
    \end{pmatrix}.
\end{aligned}$}
\end{equation}
The notation sticks with $M$ instead of $N$ to stress the dimensional invariance. Notice that the new triple is now defined over the function field of $\pazocal{X}_\triangledown$.  

\begin{remark}
    Normalizing back to $SL_2(\mathbb{C})$, direct computations prove the transformed triple \eqref{tildeN} is indeed the one corresponding to $w_2$: its local monodromy match the multiplicative version \eqref{multparamchange} of $w_2(\theta)$, while global monodromy data are preserved:
\begin{equation}
    \tilde{x}_i=\iota_\infty\iota_i\mathrm{tr}(\widetilde{M}_j\widetilde{M}_k)=\mathrm{tr}(M_jM_k)=x_i,
\end{equation}
for distinct $i,j,k\in\{1,2,3\}$. 
\end{remark}

We first focus on the spectra, as data independent of the basis completion:
\begin{gather*}
    \mathrm{eig}(\widetilde{M}_i)=\left\{1,\frac{e^{2\pi i\theta_i}}{Z_{O2}Z_{B2}Z_{G2}}\right\},\\[.25em]
    \widetilde{M}_\infty:=(\widetilde{M}_1\widetilde{M}_2\widetilde{M}_3)^{-1}=\begin{pmatrix}
        e^{-2\pi i(\theta_1+\theta_2+\theta_3)}Z_{O2}^2Z_{B2}^2Z_{G2}^2 & 0\\
        -\tilde{z} & Z_{O2}Z_{B2}Z_{G2}
    \end{pmatrix}
\end{gather*}
for 
\begin{equation*}\resizebox{\textwidth}{!}{$\begin{aligned}
    \tilde{z}=&\,Z_{O2}Z_{G2}(1+Z_{O2}+Z_{O2}Z_{B2})\\\cdot&\frac{1+e^{-2\pi i\theta_1}Z_{O2}(1+Z_{B2}+Z_{B2}Z_{G2})+Z_{O2}Z_{B2}\big(e^{-2\pi i(\theta_1+\theta_2+\theta_3)}Z_{B2}Z_{G2}(1+Z_{G2})+e^{-2\pi i(\theta_1+\theta_2)}(1+Z_{B2}+Z_{B2}Z_{G2})\big)}{1+Z_{G2}+Z_{O2}Z_{G2}}.
    \end{aligned}$}
\end{equation*}
These formulae imply our middle convolution entails the change of parameters
\begin{equation}\label{multparamchange}\resizebox{\textwidth}{!}{$
    \left(e^{2\pi i\theta_1},e^{2\pi i\theta_2},e^{2\pi i\theta_3},\nu,\nu^{-1}e^{-2\pi i (\theta_1+\theta_2+\theta_3)}\right) \longmapsto \left(\nu e^{2\pi i\theta_1},\nu e^{2\pi i\theta_2},\nu e^{2\pi i\theta_3},\nu^{-1},\nu^{-2}e^{-2\pi i (\theta_1+\theta_2+\theta_3)}\right),$}
\end{equation}
which is exactly the multiplicative analogue of $w_2(\theta)$, cf. \cite[(5.18)]{Filipuk2007}. The latter is recovered by lifting \eqref{multparamchange} directly to the system's parameters as the cyclic formulae
\begin{equation}\label{thetachange}
w_2(\theta_1,\theta_2,\theta_3,\theta_\infty)=\textstyle\left(\frac{+\theta_1-\theta_2-\theta_3-\theta_\infty}{2},\frac{-\theta_1+\theta_2-\theta_3-\theta_\infty}{2},\frac{-\theta_1-\theta_2+\theta_3-\theta_\infty}{2},\frac{-\theta_1-\theta_2-\theta_3+\theta_\infty}{2}\right).
\end{equation}

\subsection{Cluster features}\label{sec:combinatorics}

We now focus on the specific shape of our triple \eqref{tildeN}, which is the very outcome of the chosen basic completion.

Notice that each matrix $\widetilde{M}_i$ reproduces the same pattern of the corresponding $\widehat{M}_i$: e.g., for $i=1$ both are lower antitriangular with $(2,2)$-entries corresponding precisely via change \eqref{multparamchange}. In fact, completion \eqref{abcd} is the unique one allowing to encapsulate our Okamoto-type multiplicative middle convolution as a single transformation of the \emph{whole} set of coordinate.
Explicitly, \eqref{multparamchange} reads on the Casimir $\pazocal{X}$-coordinates as
\begin{equation}\label{Caschange}
    \begin{aligned}
    &Z_{O1}=\iota_1^{-2} \longmapsto \iota_1^{-1}\iota_2\iota_3\iota_\infty,\\
    &Z_{B1}=\iota_2^{-2} \longmapsto \iota_1\iota_2^{-1}\iota_3\iota_\infty,\\
    &Z_{G1}=\iota_3^{-2} \longmapsto \iota_1\iota_2\iota_3^{-1}\iota_\infty,
    \end{aligned}
\end{equation}
and is extended to the rest of the chart as
\begin{equation}\label{Zchange}
\begin{aligned}
    Z_{O2} \longmapsto \tilde{Z}_{O2}:=\frac{1+Z_{O2}+Z_{O2}Z_{B2}}{Z_{B2}(1+Z_{G2}+Z_{O2}Z_{G2})},\\[.5em]
    Z_{B2} \longmapsto \tilde{Z}_{B2}:=\frac{1+Z_{B2}+Z_{B2}Z_{G2}}{Z_{G2}(1+Z_{O2}+Z_{O2}Z_{B2})},\\[.5em]
    Z_{G2} \longmapsto \tilde{Z}_{G2}:=\frac{1+Z_{G2}+Z_{O2}Z_{G2}}{Z_{O2}(1+Z_{B2}+Z_{B2}Z_{G2})}.
\end{aligned}
\end{equation}

Cyclic formulae \eqref{Zchange}, in the anticipated form \eqref{tastemut}, are deduced by just comparing entries in the same-shaped triples $\widehat{\mathbf{M}}$ and $\widetilde{\mathbf{M}}$; it is straightforward to check that they (entry-wise) transform the former triple into the latter.
Moreover, they precisely entail the inversion of $\nu$ prescribed by \eqref{multparamchange}, since
\begin{equation}
    \tilde{Z}_{O2}\tilde{Z}_{B2}\tilde{Z}_{G2}=\frac{1}{Z_{O2}Z_{B2}Z_{G2}}.
\end{equation}

\subsubsection{Ensemble viewpoint}

The cluster Poisson properties underlying the rational triple of assignments \eqref{Zchange} are best understood in the language of ensembles, which is given a minimal description in \Cref{app:X&A}.

From its native $\pazocal{X}$-viewpoint, the triple is naturally expressed as the cluster transformation
\begin{equation}\label{clustrans}
    \sigma_{O}\circ(\mu_O\circ\mu_G\circ\mu_B\circ\mu_O),
\end{equation}
where $\sigma_{O}$ denotes the quiver isomorphism permuting vertices $B$ and $G$. \Cref{fig:clustrans} gives a step-by-step visualization of the cluster transformation at the level of quivers, while \Cref{tab:Zmutated} gives formulae for each mutated chart over the initial coordinates. Notice that, in quiver terms, the cluster transformation simplifies to the identity map: this echoes the defining pattern-preservation property of the map \eqref{Zchange} itself.
\begin{figure}
    \centering
    \includegraphics[width=.8\textwidth]{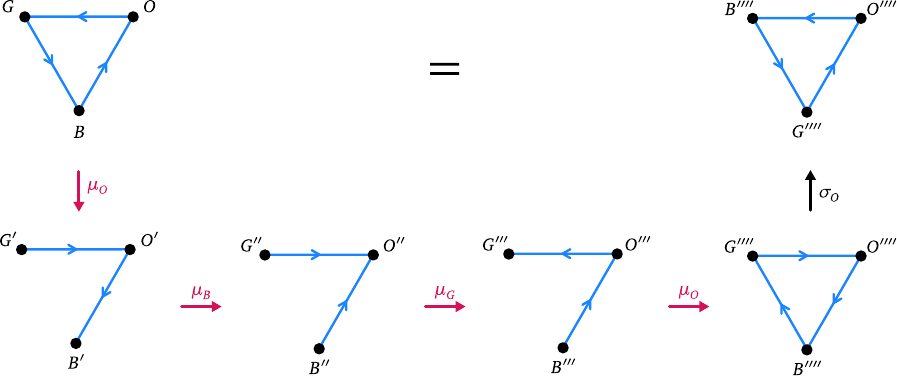}
    \caption{Cluster transformation \eqref{clustrans} applied to the nontrivial component of the quiver $\triangledown$, whose vertex set $\{O,B,G\}$ is labeled after the corresponding (subscript $2$) $\pazocal{X}$-coordinates.}\label{fig:clustrans}
\end{figure}
\begin{table}\centering
\begin{tabular}{llcccl}\toprule
$\pazocal{Q}$ & $\hspace{3em}$ & \multicolumn{3}{c}{$\pazocal{X}$-coordinates}\\\cmidrule(lr){3-5}\midrule
$\triangledown$   & $\hspace{3em}$ & $Z_{O2}$ & $Z_{B2}$ & $Z_{G2}$ \\[.75em]
$\triangledown'$ & $\hspace{3em}$ & $\frac{1}{Z_{O2}}$ & $\frac{Z_{O2}Z_{B2}}{1+Z_{O2}}$ & $Z_{G2}(1+Z_{O2})$  \\[.75em]
$\triangledown''$ & $\hspace{3em}$ & $\frac{Z_{B2}}{1+Z_{O2}+Z_{O2}Z_{B2}}$ & $\frac{1+Z_{O2}}{Z_{O2}Z_{B2}}$ & $Z_{G2}(1+Z_{O2})$ \\[.75em]
$\triangledown'''$  & $\hspace{3em}$ & $\frac{Z_{B2}(1+Z_{G2}+Z_{O2}Z_{G2})}{1+Z_{O2}+Z_{O2}Z_{B2}}$ & $\frac{1+Z_{O2}}{Z_{O2}Z_{B2}}$ & $\frac{1}{Z_{G2}(1+Z_{O2})}$  \\[.75em]
$\triangledown''''$  & $\hspace{3em}$ & $\frac{1+Z_{O2}+Z_{O2}Z_{B2}}{Z_{B2}(1+Z_{G2}+Z_{O2}Z_{G2})}$ & $\frac{1+Z_{G2}+Z_{O2}Z_{G2}}{Z_{O2}(1+Z_{B2}+Z_{B2}Z_{G2})}$ & $\frac{1+Z_{B2}+Z_{B2}Z_{G2}}{Z_{G2}(1+Z_{O2}+Z_{O2}Z_{B2})}$ \\\bottomrule
\end{tabular}\caption{Cluster $\pazocal{X}$-charts for the sequence of mutations in the cluster transformation \eqref{clustrans}. The first column details the quiver from \Cref{fig:clustrans} corresponding to each chart, whose cluster $\pazocal{X}$-coordinates are written over the initial ones as prescribed by formulae \eqref{Xmutation}.}\label{tab:Zmutated}
\end{table}
\begin{remark}\label{rmk:reflection}
    Cluster transformations \eqref{clustrans} are well-known: e.g., as the $\mathrm{DT}$-transformation $\mathbf{K}$ of $A_3$ \cite{GS2018} or the reflection $R_{A_3}$ \cite{Masuda2024}. In particular, one can turn the permutation $\sigma$ into a sequence of five mutations using the so-called ``pentagon relation'', matching the anticipated form \eqref{mutprel}. 
\end{remark}

Seen as a cluster transformation, the cyclic triple upgrades to a Poisson isomorphism
\begin{equation}
        \mathbb{C}_{\tilde{\iota}}(\pazocal{X}_\triangledown)\, \xrightarrow{\raisebox{-.75ex}[0ex][0ex]{\,\,$\sim$\,\,}}\,\mathbb{C}_{\iota}(\pazocal{X}_\triangledown)
\end{equation} 
where
\begin{equation}
\mathbb{C}_{\tilde{\iota}}(\pazocal{X}_\triangledown):=\mathbb{C}\big( \tilde{Z}_{O2},\tilde{Z}_{B2},\tilde{Z}_{G2}\big)\big/(\tilde{Z}_{O2}\tilde{Z}_{B2}\tilde{Z}_{G2}-\tilde{\iota}_1\tilde{\iota}_2\tilde{\iota}_3\tilde{\iota}_\infty)
\end{equation}
for 
\begin{equation}
    \tilde{\iota}_l^2=\frac{\iota_l}{\iota_i\iota_j\iota_k},\quad i\neq j\neq k\neq l\in\{1,2,3,\infty\}.
\end{equation}
Mirroring the quiver picture, the ideal remains cubic.

Through the homomorphism $p$ \eqref{p}, we can frame the cluster transformation in terms of the whole ensemble.
In particular, denoting by $A_k$ the cluster $\pazocal{A}$-coordinate of $\pazocal{A}_\triangledown$ at vertex $k$,
\begin{equation}\label{ZtoC}
    (Z_{O1}, Z_{B1}, Z_{G1},Z_{O2}, Z_{B2}, Z_{G2})\xmapsto{\hspace{.5em}p^*\hspace{.25em}}
\left(1,1,1,\frac{A_{G2}}{A_{B2}},\frac{A_{O2}}{A_{G2}},\frac{A_{B2}}{A_{O2}}\right).
\end{equation}
As expected, this is a projection to the (unital level) symplectic leaf of $\pazocal{X}_\triangledown$: the first triple, made of Casimir elements, gets evaluated.
It is then straightforward to check that \eqref{Zchange} simplifies to the identity map when pull-backed to the $\pazocal{A}$-space.
Since $p$ commutes with mutations, this simplifications can be equivalently deduced from the action of the cluster transformation on $\pazocal{A}$-coordinates: e.g., from \Cref{tab:Amutated} we get that
\begin{equation}
    \frac{A_{G2}}{A_{B2}}\xmapsto{\hspace{.5em}\mu_O\circ\mu_G\circ\mu_B\circ\mu_O\hspace{.5em}}\frac{\phantom{.}\frac{A_{O2}+A_{B2}+A_{G2}}{A_{O2}A_{G2}}\phantom{.}}{\frac{A_{O2}+A_{B2}+A_{G2}}{A_{O2}A_{B2}}}\xmapsto{\hspace{.5em}\sigma_{O}\hspace{.5em}}\frac{\phantom{.}\frac{A_{O2}+A_{B2}+A_{G2}}{A_{O2}A_{B2}}\phantom{.}}{\frac{A_{O2}+A_{B2}+A_{G2}}{A_{O2}A_{G2}}}=\frac{A_{G2}}{A_{B2}}.
\end{equation}
In other words, the Poisson map realizing $w_2$ is invisible to the subtorus $\pazocal{U}_\triangledown=p(\pazocal{A}_\triangledown)$ and thus to be understood as a pure change of the Casimirs' level set.

\begin{figure}[!t]
    \centering
    \includegraphics[width=.8\textwidth]{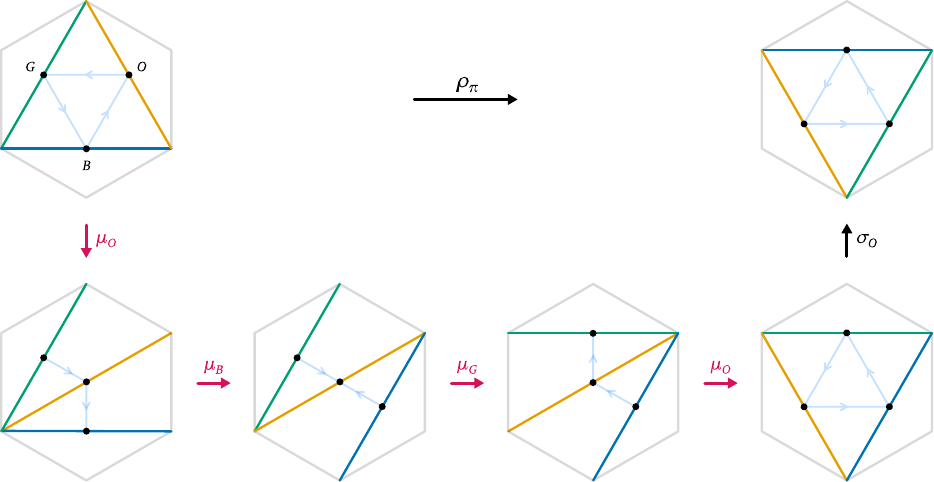}
    \caption{Cluster transformation \eqref{clustrans} as flips of colored triangulations in the hexagon $\{6\}$. On equilateral triangulations, the whole map amounts to the rotation $\rho_{\pi}$.}\label{fig:triangulations}
\end{figure}

\begin{table}[!t]\centering
\begin{tabular}{llcccl}\toprule
$\pazocal{Q}$ & $\hspace{3em}$  & \multicolumn{3}{c}{$\pazocal{A}$-coordinates}\\\cmidrule(lr){3-5}\midrule
$\triangledown$   & $\hspace{3em}$  & $A_{O2}$ & $A_{B2}$ & $A_{G2}$ \\[.5em]
$\triangledown'$ & $\hspace{3em}$  & $\frac{A_{B2}+A_{G2}}{A_{O2}}$ & $A_{B2}$ & $A_{G2}$ \\[.5em]
$\triangledown''$ & $\hspace{3em}$  & $\frac{A_{B2}+A_{G2}}{A_{O2}}$ & $\frac{A_{O2}+A_{B2}+A_{G2}}{A_{O2}A_{B2}}$ & $A_{G2}$ \\[.5em]
$\triangledown'''$  & $\hspace{3em}$  & $\frac{A_{B2}+A_{G2}}{A_{O2}}$ & $\frac{A_{O2}+A_{B2}+A_{G2}}{A_{O2}A_{B2}}$ & $\frac{A_{O2}+A_{B2}+A_{G2}}{A_{O2}A_{G2}}$ \\[.5em]
$\triangledown''''$  & $\hspace{3em}$ & $\frac{A_{O2}+A_{B2}+A_{G2}}{A_{B2}A_{G2}}$ & $\frac{A_{O2}+A_{B2}+A_{G2}}{A_{O2}A_{B2}}$ & $\frac{A_{O2}+A_{B2}+A_{G2}}{A_{O2}A_{G2}}$ \\\bottomrule
\end{tabular}
\caption{Cluster $\pazocal{A}$-charts for the sequence of mutations in the cluster transformation \eqref{clustrans}. The first column details the quiver from \Cref{fig:clustrans} corresponding to each chart, whose cluster $\pazocal{A}$-coordinates are written over the initial ones as prescribed by formulae \eqref{Amutation}.}\label{tab:Amutated}
\end{table}

\subsubsection{Associahedron viewpoint}\label{sec:hedron}

As it happens, the combinatorial nature of the cyclic triple \eqref{Zchange} goes far beyond the specific cluster formula \eqref{clustrans}. Indeed, that sequence of mutations is not unique and a whole equivalent family exists, codified by the dynamics of flip operations.  
In particular, the most essential description for this family reads in terms of the combinatorics of flips of triangulations on the regular hexagon $\{6\}$---as six are the sides resulting from gluing four triangles (cf. \Cref{fig:2dimloops}).

A triangulation on $\{6\}$ is singled out by a triple of non-intersecting chords, and a flip removes a selected chord to replace it with the other diagonal in the resulting quadrilateral. The overall combinatorics is captured by the associahedron $\mathbb{A}_3$, whose vertices correspond to triangulations and edges represent flips \cite[Figure 1]{Fomin2021}.
In our dictionary, each triangulation corresponds to a cluster $\{Z_{O2},Z_{B2},Z_{G2}\}$ and the $\pazocal{X}$-mutation $\mu_\alpha$, $\alpha=O,B,G$, flips the chord corresponding to $Z_{\alpha2}$. In order to fully encode mutation \emph{formulae}, we need to tell apart the dynamics of each individual coordinate: this further layer of detail can be captured by coloring the chords of the triangulation, i.e., by passing to the \textbf{colorful associahedron} $\mathbb{A}^c_3$.

This abstract polytope is known \cite{Pardo2015} to tessellate an orientable genus $4$ surface with $18$ decagonal and $18$ tetragonal faces totaling $84$ three-valent vertices, one for each colored triangulation of the hexagon, of which $12$ are surrounded by three decagons and $72$ by two decagons and one tetragon. In particular, its $1$-skeleton is a connected $3$-regular graph and $\mathbb{A}^c_3/\mathfrak{S}_3\simeq\mathbb{A}_3$, for the symmetric group acting by color permutations.

\Cref{fig:triangulations} illustrates the dynamics on $\mathbb{A}^c_3$ induced by formula \eqref{clustrans}. The starting cluster, singled out by the Teichm\"uller parametrization, corresponds to the top-left equilateral triangulation, thereafter referred to as the \textbf{reference} one, whose chords have been colored after the coordinates' subscripts.
Notice that the final triangulation, itself equilateral, is obtained by rotation of $\pi$.
There is a total of twelve such equilateral triangulations, halved by upward or downward orientation: these are precisely the vertices of $\mathbb{A}^c_3$ belonging only to decagons.

We are about to prove that $w_2$, as a uniform operation requiring no defining vertex, reads as the $\pi$-rotation between the unique upward and downward equilateral triangulations of the standard associahedron $\mathbb{A}_3$, whose colorful covering $\mathbb{A}^c_3$ becomes necessary when passing to the language of mutations. Remarkably, the geometry of $\mathbb{A}^c_3$ is the natural structure encapsulating the many mutational expressions of $w_2$: 
\begin{theorem}\label{thm:main}
    The mutation formula attached to a sequence of flips of colored triangulations on $\mathbb{A}^c_3$ is path-independent, i.e., its mapping of clusters is uniquely determined by the initial and final triangulations the sequence connects between. In particular, rotation of $\pi$ admits a well-defined involutive mutation formula for equilateral triangulations on $\mathbb{A}^c_3$, whose explicit map of clusters reads on the reference triangulation as \eqref{Zchange}. 
\end{theorem}
\begin{proof}
Being mutations involutive and $\mathbb{A}^c_3$ connected, we can restrict to paths between equilateral triangulations which, up to $\pi$-rotation, differ at most by two chord-permutations.
Moreover, each tetragon can be treated as a single $4$-valent vertex: it is easily checked that the two paths between an ordered pair of its vertices coincide as maps on clusters.  
This follows from the fact that triangulations labeling a tetragon share a longer colored chord (cf. \Cref{fig:colorass}), and the two flips of the shorter ones commute also as cluster mutations.

Then, the reduced building blocks connecting between any two equilateral triangulations are sequences either alternated $\mu_\beta\mu_\gamma\mu_\beta\mu_\gamma\mu_\beta$ or in form $\mu_\alpha\mu_\gamma\mu_\beta\mu_\alpha=\mu_\alpha\mu_\beta\mu_\gamma\mu_\alpha$ (cf. \Cref{fig:nearest}). In geometric terms, these two sequences stem from decagons and tetragons, respectively.
 
Decagons arise by fixing a shorter chord in a triangulation's dynamics: mutations must flip between $2$-colored triangulations of a \emph{pentagon}, whose evolution is precisely the $10$-cycle $\mathbb{A}^c_2$---which is a $2$-fold covering of $\mathbb{A}_2$, see \cite[Figure 2]{Pardo2015}. Motion along the decagon fixing chord $\alpha$ corresponds to alternating mutations $\mu_\beta$ and $\mu_\gamma$, for $\beta,\gamma\neq\alpha$, whose full $10$-cycle simplifies to the identity, i.e., the Coxeter-type relation
\begin{equation}
    (\mu_\gamma\mu_\beta)^5=\mathrm{id} 
\end{equation}
holds. Faithful to the triangulation dynamics, each half decagon connecting between its equilateral triangulations of $\mathbb{A}^c_3$ delivers a mutation formula for permutations of $\pazocal{X}$-coordinates:
\begin{equation}\label{sigma}
    \sigma_{\alpha}=(\mu_\beta\mu_\gamma)^2\mu_\beta=(\mu_\gamma\mu_\beta)^2\mu_\gamma. 
\end{equation}
In particular,
\begin{equation*}
    \begin{matrix}
        \sigma^*_{\alpha} & : & \{Z'_{\alpha},Z'_{\beta},Z'_{\gamma}\} & \longrightarrow & \{Z_{\alpha},Z_{\beta},Z_{\gamma}\}\\[.25em]
         & & Z'_\alpha & \longmapsto & Z_{\alpha},\\
         & & Z'_\beta & \longmapsto & Z_{\gamma},\\
         & & Z'_\gamma & \longmapsto & Z_{\beta}
    \end{matrix}
\end{equation*}
as detailed in \Cref{tab:sigmamut} for the $\alpha=O$ case.
Therefore, for a fixed upward or downward orientation, formula \eqref{sigma} identifies the dynamics of equilateral triangulations with that induced on clusters: backward comparison of chords' colors coincides with the mapping due to the sequence of cluster mutations, making the latter path-independent.
\begin{table}[!t]\centering
\begin{tabular}{llcccl}\toprule
Mutation steps of $\sigma_{O}$ & $\hspace{3em}$ & \multicolumn{3}{c}{$\pazocal{X}$-coordinates}\\\cmidrule(lr){3-5}\midrule
\multicolumn{1}{c}{$\mu_{B}$}   & $\hspace{3em}$ & $(1+Z_{B2})Z_{O2}$ & $Z_{B2}^{-1}$ & $\frac{Z_{B2}Z_{G2}}{1+Z_{B2}}$ \\[.75em]
\multicolumn{1}{c}{$\mu_{G}$} & $\hspace{3em}$ & $(1+Z_{B2})Z_{O2}$ & $\frac{Z_{G2}}{1+Z_{B2}+Z_{B2}Z_{G2}}$ & $\frac{1+Z_{B2}}{Z_{B2}Z_{G2}}$  \\[.75em]
\multicolumn{1}{c}{$\mu_{B}$} & $\hspace{3em}$ & $\frac{Z_{O2}Z_{G2}}{1+Z_{G2}}$ & $\frac{1+Z_{B2}+Z_{B2}Z_{G2}}{Z_{G2}}$ & $\frac{1}{Z_{B2}(1+Z_{G2})}$ \\[.75em]
\multicolumn{1}{c}{$\mu_{G}$}  & $\hspace{3em}$ & $\frac{Z_{O2}Z_{G2}}{1+Z_{G2}}$ & $Z_{G2}^{-1}$ & $Z_{B2}(1+Z_{G2})$  \\[.75em]
\multicolumn{1}{c}{$\mu_{B}$}  & $\hspace{3em}$ & $Z_{O2}$ & $Z_{G2}$ & $Z_{B2}$ \\\bottomrule
\end{tabular}\caption{Sequence of mutations $\sigma_{O}$ on the reference triangulation, with cluster $\pazocal{X}$-charts written over the initial one $\{Z_{O2},Z_{B2},Z_{G2}\}$ as prescribed by formulae \eqref{Xmutation}. In terms of triangular quivers, notice that $\sigma_{O}$ entails a reversal of orientation, from the reference counterclockwise to clockwise.}\label{tab:sigmamut}
\end{table}

We are left to deal with orientation reversals. As anticipated, this is where $w_2$ manifests as the path-independent mutation formula for rotation of $\pi$. Following \Cref{fig:colorass}, each {\color{2red}red} $4$-path going through a tetragon is an equivalent $w_2$ dynamics: up to a final permutation, the cluster transforms as \eqref{Zchange}. In other words, for $\beta,\gamma\neq\alpha$,
\begin{equation}\label{w2mutation}
    \mu_{w_2}:=(\mu_\beta\mu_\gamma\mu_\beta\mu_\gamma\mu_\beta)(\mu_\alpha\mu_\gamma\mu_\beta\mu_\alpha)=(\mu_\gamma\mu_\beta\mu_\gamma\mu_\beta\mu_\gamma)(\mu_\alpha\mu_\gamma\mu_\beta\mu_\alpha).
\end{equation}
Crucially, this sequence of mutations is involutive and commutes with any permutation, behaving precisely as half of a full rotation. Since permutations reverse the quiver orientation, notice the commuting property requires the explicit formula for sequence \eqref{w2mutation} on the \emph{clockwise} triangular quiver, which we include for completeness:
\begin{equation}
\begin{aligned}
    \mu_{w_2}^*(Z'_{O2})&=\frac{1+Z_{O2}+Z_{O2}Z_{G2}}{Z_{G2}(1+Z_{B2}+Z_{O2}Z_{B2})},\\
    \mu_{w_2}^*(Z'_{B2})&=\frac{1+Z_{B2}+Z_{O2}Z_{B2}}{Z_{O2}(1+Z_{G2}+Z_{B2}Z_{G2})},\\
    \mu_{w_2}^*(Z'_{G2})&=\frac{1+Z_{G2}+Z_{B2}Z_{G2}}{Z_{B2}(1+Z_{O2}+Z_{O2}Z_{G2})}.
\end{aligned}
\end{equation}
 It follows that the whole dynamics of equilateral triangulations is identified with its mutation analogue on clusters: whenever a path connects between oppositely oriented equilateral triangulations, one can always assume a $\pi$-rotation as first operation and deduce the overall mutation by backward color comparison---i.e., a path-independent recipe.
\end{proof}
\begin{figure}[!t]
    \centering
    \includegraphics[width=.315\textwidth]{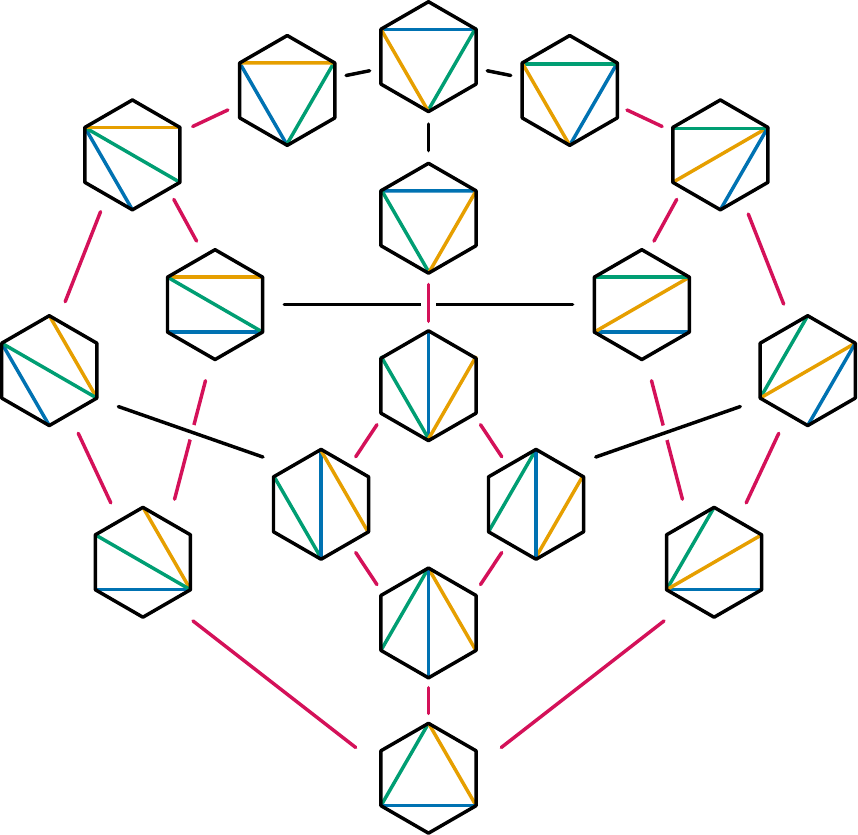}
    \caption{Shortest paths on the colored $3$-associahedron amounting to a $\pi$-rotation of the reference triangulation. Each upward {\color{2red}red} $4$-path encapsulates $w_2$, up to a final permutation indicated in black. The other triple of black edges delimitate the unique three decagons incident to the reference triangulation. The rightmost shortest path is the one in \Cref{fig:triangulations}.}\label{fig:colorass}
\end{figure}
\begin{figure}[!t]
    \centering
    \includegraphics[width=.3\textwidth]{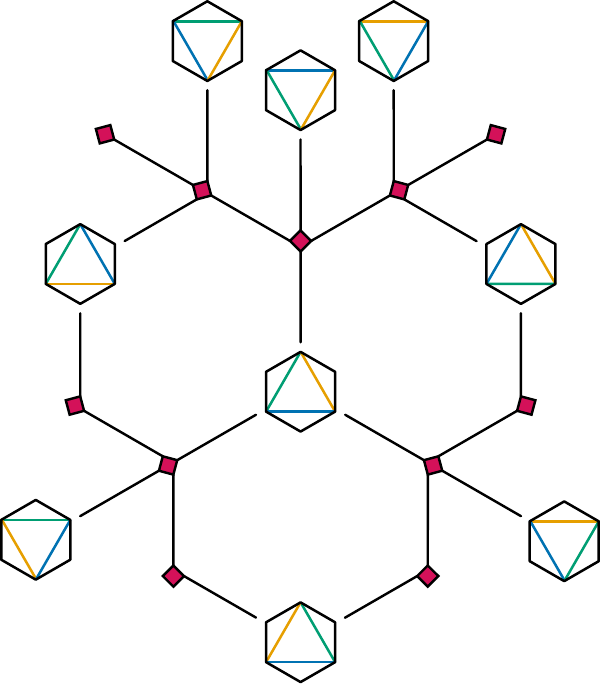}
    \caption{All distance $2$ equilateral triangulations obtained via chord-reflections of our reference triangulation, together with all distance $3$ ones obtained either by minimal rotation or a single chord-permutation. Tetragons are simplified to square-shaped {\color{2red}red} vertices. All three incident edges are displayed for the reference triangulation, whose incident decagons read in this schematic as the three adjacent hexagons.}\label{fig:nearest}
    \end{figure}
\begin{remark}
    Paths move from an equilateral triangulation to another, hitting one or two tetragons at each step (steps can hit up to six tetragons, but are factorized over these two minimal cases).
    The former case entails the reflection along a chord $\alpha$, and is therefore expressed as $\sigma_{\!\alpha}\mu_{w_2}$.
    The latter is twofold, depending on whether the tetragons belong to the same decagon, resulting in $\sigma_{\!\alpha}$ for the fixed chord $\alpha$, or two adjacent ones, resulting in a $\pm\frac{\pi}{3}$-rotation written as $\sigma_{\!\gamma}\sigma_{\!\beta}\mu_{w_2}$ for a suitable pair of permutations. See \Cref{fig:nearest} for examples of all three cases.
    In particular, permutations $\sigma_{\!\alpha}$ give the symmetry group of the equilateral triangle, namely the Weyl group $W(A_2)$. Their compositions amount to rotations of $\frac{2\pi n}{3}$, $n\in\mathbb{Z}$, that cannot achieve $\mu_{w_2}$. This fact is manifest in the language of triangulations which, unlike the quiver-theoretic one, is able to detect---and distinguish---all rotations.
\end{remark}

\begin{remark}\label{rmk:qPainlevé}
    As already observed, each permutation reverses the orientation of the reference quiver $\triangledown$. On the contrary, the quiver is invariant under $\mu_{w_2}$. Such invariance plays a pivotal role in the $q$-difference Painlevé world, which fits the discrete dynamics naturally associated to cluster mutations, so much so that all $q$-Painlevé equations generate from mutation-periodic quivers \cite{BGM2018}. Moreover, the equation's symmetry group itself can be realized in terms of mutations and permutations. The setting here developed for ${w_2}$ invites to study similar phenomena also in the continuous differential world.  
\end{remark}

\subsubsection{Fat graph viewpoint}

A more involved dual description of mutation formula \eqref{w2mutation} for the monodromic $w_2$ can be given as a special reversal operation for star-shaped fat graphs. Despite requiring to further keep track of how the Casimir $\pazocal{X}$-coordinates change along with geometry, this dual viewpoint results in a totally equivalent cluster dynamics.

A triangulation of the hexagon $\{6\}$ dualizes to a \textbf{spine} $\Gamma_{0,4}$, i.e., a $3$-valent fat graph living on the $4$-punctured Riemann sphere without self-intersection, each face of which contains exactly one puncture. \Cref{fig:2dimloops} provides an example of $\Gamma_{0,4}$ with the case of our reference triangulation.
Mutation $\mu_\alpha$ flips the dog bone component of the fat graph dual to chord $\alpha$ as displayed in \Cref{fig:fatgraphflip}. In the quiver terms prescribed by the Teichm\"uller theory for $PSL_2$, each triangle entails a counterclockwise triangular quiver whose variables ``amalgamate'' at adjacent sides with formula $Z_\alpha:=Z_aZ_b$ \cite{Goncharov2022}, for $Z_a$ and $Z_b$ denoting the variables attached to the pair of sides glued into chord $\alpha$.

On the one hand, each variable labeling a chord generates via \textbf{amalgamation}, and the corresponding quiver mutation matches the fat-graph flip.
On the other hand, each Casimir corresponds to a loop which, in $\pazocal{X}$-coordinates, results as product of the variables labeling all the sides it crosses.
In particular, when the end of the dog bone dual to chord $\alpha$ self-glues to a loop, namely the dual triangle identifies two of its own sides, the flip adds to the Casimir element a factor $Z'_{\alpha}$ on top of mutation formulae.
\begin{figure}[t]
    \centering
    \includegraphics[width=.7\textwidth]{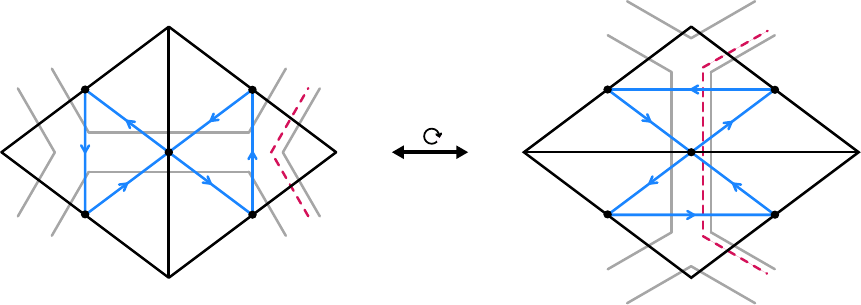}
    \caption{Flip operation for fat graphs as a rotation of a dog bone component. When the {\color{2red}red} path is a portion of a loop, the corresponding Casimir element acquires/loses a factor of the dogbone's middle variable (respectively, in the rightward/leftward case).}\label{fig:fatgraphflip}
\end{figure}
\begin{example}\label{ex:1}
    Let the right end of the dog bone in \Cref{fig:fatgraphflip} (left) be self-glued and, denoting by $\alpha$ the dog bone's dual cord, label counterclockwise by $Z_\alpha,Z_b,Z_t$ the $\pazocal{X}$-coordinates attached to the right triangle. Then, the Casimir element for the resulting {\color{2red}red} loop reads as $Z_{R1}:=Z_{t}Z_{b}$. After the flip, the loop crosses also chord $\alpha$ and the Casimir updates to $Z'_{R1}=Z'_{t}Z'_{\alpha}Z'_{b}$.
\end{example}
Following the above recipe, the sequence of mutations $\mu_{w_2}$ \eqref{w2mutation} reverses the star-shape of the initial fat graph giving rise to an operation we name \textbf{inside-out}. A visual description for the case $\alpha=O,$ $\beta=B,$ $\gamma=G$ is given in \Cref{fig:insideout}.
It is a direct check that Casimir values are preserved under each mutation step of the inside-out, in accordance with the simpler description of \Cref{sec:hedron} just in terms of chords.
\begin{figure}
    \centering
    \includegraphics[width=.6\textwidth]{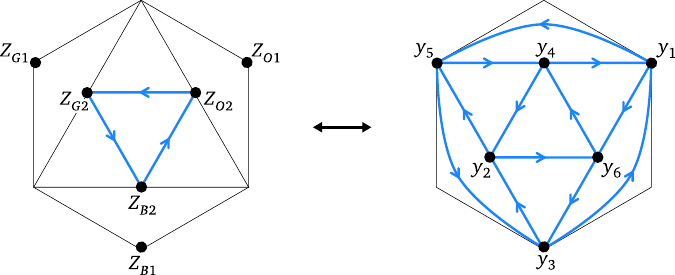}
    \caption{Quivers attached to the self-folded triangulation (left) and the non self-folded one (right). No sequence of standard $\pazocal{X}$-mutations can map between the two, and more general Poisson isomorphisms, like the one defined by \eqref{Poissoniso}, are needed for their identification.}\label{fig:amalquivers}
\end{figure}
\begin{example}
    \Cref{tab:ZO1} tracks the Casimir element $Z_{O1}$, which is generated by the {\color{3ochre}ochre} loop of the reference triangulation in \Cref{fig:insideout}, in its dynamics under the chord-reflection component of the inside-out. Written in terms of the initial cluster $\{Z_{O1},Z_{O2},Z_{B1},Z_{B2},Z_{G1},Z_{G2}\}$, whose subscript $1$ (subscript $2$) coordinates correspond to colored loops (colored chords), each step simplifies to the identity map.
\end{example}

\begin{table}\centering
\begin{tabular}{cc}\toprule
Mutation & Casimir element ({\color{3ochre}ochre})\\\midrule
$\mu_{O}$   & $Z'_{O1}Z'_{O2}=(Z_{O1}Z_{O2})Z_{O2}^{-1}$ \\[.75em]
$\mu_{B}$ & $Z''_{O1}Z''_{O2}Z''_{B2}=\big(Z_{O1}Z_{O2}\frac{1+Z_{O2}+Z_{O2}Z_{B2}}{1+Z_{O2}}\big)\frac{Z_{B2}}{1+Z_{O2}+Z_{O2}Z_{B2}}\frac{1+Z_{O2}}{Z_{O2}Z_{B2}}$  \\[.75em]
$\mu_{G}$ & $Z'''_{O1}Z'''_{O2}Z'''_{B2}Z'''_{G2}=\big(Z_{O1}Z_{O2}Z_{G2}\frac{1+Z_{O2}+Z_{O2}Z_{B2}}{1+Z_{G2}+Z_{O2}Z_{G2}}\big)\big(Z_{B2}\frac{1+Z_{G2}+Z_{O2}Z_{G2}}{1+Z_{O2}+Z_{O2}Z_{B2}}\big)\frac{1+Z_{O2}}{Z_{O2}Z_{B2}}\frac{1}{Z_{G2}(1+Z_{O2})}$ \\[.75em]
$\mu_{O}$  & $Z''''_{O1}Z''''_{B2}Z''''_{G2}=\big(Z_{O1}Z_{O2}Z_{G2}\frac{1+Z_{O2}+Z_{O2}Z_{B2}}{1+Z_{G2}+Z_{O2}Z_{G2}}\big)\frac{1+Z_{G2}+Z_{O2}Z_{G2}}{Z_{O2}(1+Z_{B2}+Z_{B2}Z_{G2})}\frac{1+Z_{B2}+Z_{B2}Z_{G2}}{Z_{G2}(1+Z_{O2}+Z_{O2}Z_{B2})}$ \\\bottomrule
\end{tabular}\caption{Dynamics of the Casimir element for the {\color{3ochre}ochre} loop in \Cref{fig:insideout}, up to fourth mutation.}\label{tab:ZO1}
\end{table}

\begin{remark}
The need to ``manually'' update the Casimir elements in this richer geometric picture manifests well-known shortcomings of the classical cluster framework in the presence of self-folding: in our case, mutation formulae \eqref{Xmutation} cannot affect variables attached to isolated vertices. As a direct consequence of this limitation, the quivers involved by the inside-out, displayed side by side in \Cref{fig:amalquivers}, cannot be mapped by standard mutations. In particular, the sequence $\mu_{w_2}$ \eqref{w2mutation} \emph{separately} leaves invariant each of the two quivers. Workarounds to self-folding are known in the literature (e.g., the standard \cite{FST08}) but appear to be incompatible with our amalgamation-driven approach on $\Sigma_{0,4}$.     
Therefore, we simply relax the cluster framework and rely on an elementary Poisson isomorphism to provide the missing identification:
\begin{equation}\label{Poissoniso}
    \begin{matrix}
        Z_{O1} & \longmapsto & y_1y_4y_6, & \quad & Z_{O2} & \longmapsto & y_2, \\
        Z_{B1} & \longmapsto & y_2y_3y_6, & \quad & Z_{B2} & \longmapsto & y_4, \\
        Z_{G1} & \longmapsto & y_2y_4y_5, & \quad & Z_{G2} & \longmapsto & y_6. 
    \end{matrix}
\end{equation}
When there are no self-foldings, variables are named to match the existing literature \cite[(4.4)]{Hikami2019} that depicts the quiver as a octahedron. In particular, notice that the Casimir elements are precisely mapped as prescribed by our geometric recipe, while the remaining variables identify exactly under rotation of $\pi$. 
\end{remark}

\begin{figure}[!t]
    \centering
    \includegraphics[width=.75\textwidth]{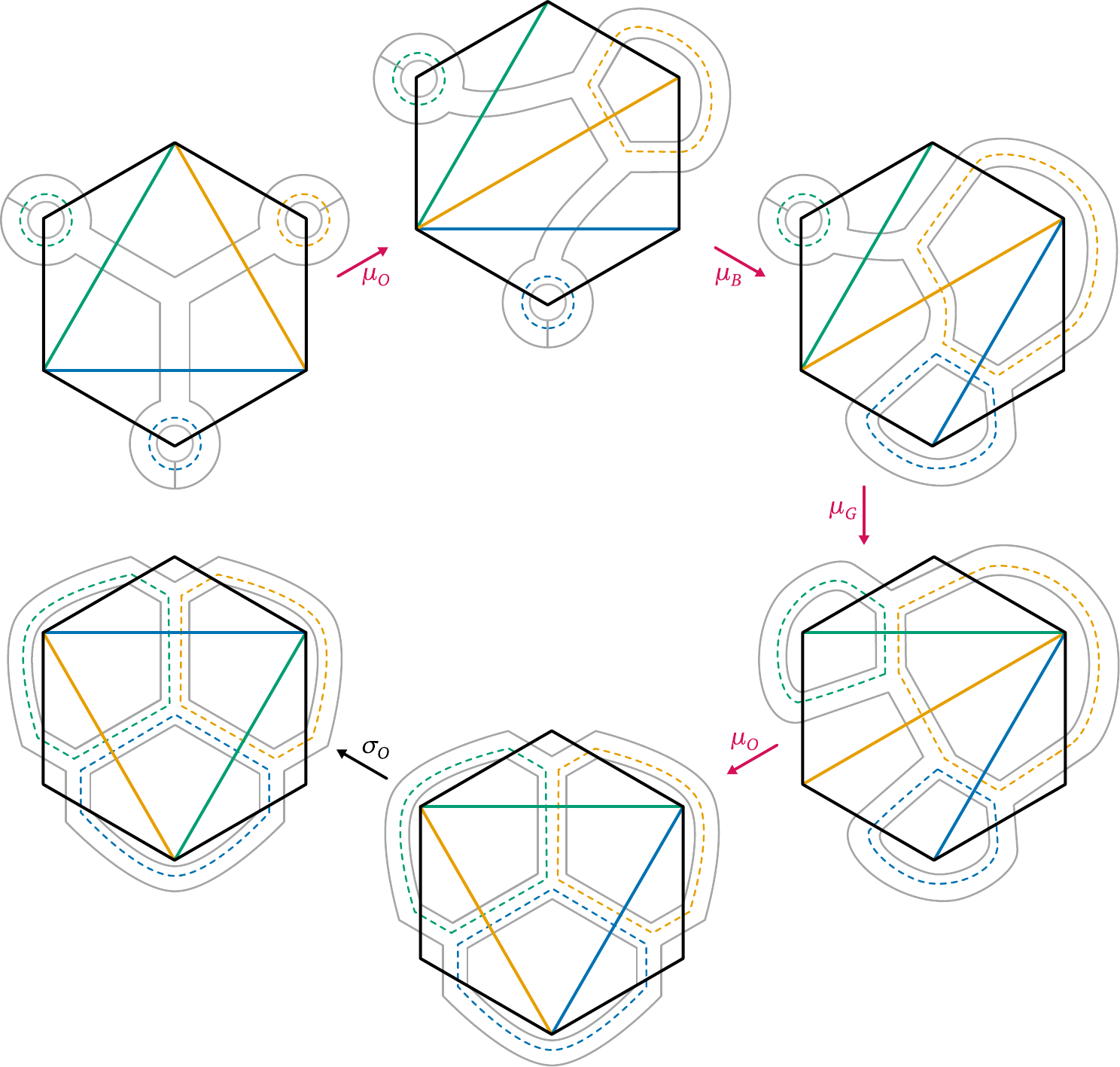}
    \caption{Inside-out of star-shaped fat graphs, dissected into one of its possible factorizations over flips. The dual triangulation is superimposed at each step, with the starting equilateral triangulation as the reference one. Self-glued vertices join at a {\color{mygray}gray} segment and loops are dashed. In cluster terms, chords correspond to subscript $2$ $\pazocal{X}\!$-coordinates, loops to Casimir ones of subscript $1$.}\label{fig:insideout}
\end{figure}

\section{Painlevé squares and the Okamoto cube}\label{sec:cube}

In order to fit the monodromic realization of $w_2$ into the existing literature and solve \Cref{pb2},
we start by giving the $P_{V\!I}$ duality a convolutional formulation.

Harnad's map $\mathfrak{H}^{\!\vee}$ of differential operators relies on generalized Schlesinger equations
\begin{equation}\label{JMMS} 
    \mathrm{d}A_i=-\sum\limits_{j\neq i}[A_i,A_j]\hspace{.1em}\mathrm{d\hspace{.1em}log}(u_i-u_j)-[A_i,\mathrm{d}(u_i\mathfrak{V})+\mathfrak{F}],\quad i=1,\ldots,p,
\end{equation}
for $\mathfrak{V}=\mathrm{diag}(v_1,\ldots,v_n)$, $A_i\in \mathrm{Mat}_n(u,v)$, and a matrix-valued one form $\mathfrak{F}_{ij}$,
which control the isomonodromic deformations of the operator
\begin{equation}
    \frac{\mathrm d}{{\mathrm d}\lambda}-\left(\mathfrak{V}+\sum\limits_{k=1}^p\frac{A_k}{\lambda-u_k}\right).
\end{equation}
For $F,G\in\mathrm{Mat}_{p \times n}(u,v)$ of maximal rank and $U=\mathrm{diag}(u_1,\ldots,u_p)$, the map reads as
\begin{equation}\label{DiffOp}
    \begin{matrix}
        \mathfrak{H}^{\!\vee} & : & \frac{\mathrm d}{{\mathrm d}\lambda}-\left(\mathfrak{V}+G^t(\lambda-U)^{-1}F\right) & \longmapsto & \frac{\mathrm d}{{\mathrm d}z}-\left(U+F(z-\mathfrak{V})^{-1}G^t\right).
    \end{matrix}
\end{equation}
The case of $P_{V\!I}$ selects $n=2,\,p=3,\,\mathfrak{V}=0$ and $U=\mathrm{diag}(0,1,t)$: the dual of \eqref{Sysspace}, with now a double pole at $\infty$, is indeed the $3$-dimensional Birkhoff (Poincaré rank $1$) anticipated system \eqref{irr}.
The duality property ensures that both isomonodromic deformations are governed by the same specialization of \eqref{JMMS}, itself here equivalent to $P_{V\!I}$.

In fact, Harnad duality can be phrased in convolutional terms analogous to those defining the GDAHA functor: map \eqref{DiffOp}, which changes dimension and type of system \eqref{Sysspace} in a single move, is dissected into a dimensional-change step, followed by a type-change step. The former is most efficiently performed as an additive middle convolution $mc$ (cf. \Cref{rmk:freeadd}), while the latter turns out to be a Laplace transform.
In particular, the intermediate $3$-dimensional Fuchsian system has rank $1$ matrix residua, coinciding (up to sign) with row-slices of $V$:
\begin{equation}\label{Fuch3}
    \frac{\mathrm d}{{\mathrm d}\lambda}X=\left(
    \sum_{k=1}^{3}\frac{{\widetilde B}_k}{\lambda-u_k} \right)X \,\Big|\,
    (\widetilde{B}_k)_{ij}=-V_{ij}\delta_{ik},\,\mathrm{eig}(\widetilde{B}_k)=\{0,0,\theta_k\}.
\end{equation}
As such, it can be equivalently written in \={O}kubo normal form \cite[(5.26)]{Filipuk2007}
\begin{equation}
    (\lambda-U)\frac{\mathrm d}{{\mathrm d}\lambda}X=A\,X,\quad A={\widetilde B}_1+{\widetilde B}_2+{\widetilde B}_3=-V,
\end{equation}
and it is then a standard fact for the (inverse) Laplace transform $\mathfrak{L}[X](z):=\oint X(\lambda) e^{z\lambda}\mathrm{d}\lambda=Y(z)$ to turn an \={O}kubo system into the Birkhoff one
$$\frac{\mathrm d}{{\mathrm d}z}Y=\left(U-\frac{A+\One}{z}\right)Y=\left(U+\frac{V-\One}{z}\right)Y$$
of Poincaré rank $1$.
\begin{remark}
    As the middle convolution preserves Schlesinger equations \cite{Filipuk2007}, system \eqref{Fuch3} gives a $3$-dimensional Fuchsian isomonodromy representation. In fact, sequences of preconditioned middle convolutions construct any rank Fuchsian system deformations of which lead to $P_{V\!I}$. Originally, Mazzocco connected this alternative Fuchsian representation with the native $2$-dimensional \eqref{Sysspace} via reduction, provided the use of a special diagonalizing basis \cite[(63)]{Mazzocco2002}. Rethought of as a middle convolution action, this ad hoc step is now promoted to a functorial operation.
\end{remark}

For completeness, we provide the computational details.

Analogously to \Cref{sec:coordinatization}, an explicit parametrization for the middle convolution's input tuple is needed: following \cite{Harnad1994},
\begin{equation}\label{Aform}
{A}_k={1\over2}\left(\begin{array}{cc}
a_k b_k&-a_k^2\\  b_k^2-{\theta_k^2\over a_k^2}&-a_k b_k\\ 
\end{array}\right),
\end{equation}
for complex constants $a_k$ and $b_k$, $k=1,2,3$, such that
\begin{equation}\label{conds}
\sum_{k=1}^3 a_k b_k=-\theta_\infty,\quad  \sum_{k=1}^3 a_k^2=0,\quad
\sum_{k=1}^3 {\theta_k^2\over a_k^2}-b_k^2=0.
\end{equation}
Taking the same additive preconditioner of the Okamoto case \eqref{hatA}, the middle convolution $mc$ delivers the desired dimensional increment.
In particular, the selected change of basis completes the vectors
\begin{equation}
    \left(\frac{a_k^2}{a_kb_k+\theta_k},1\right)\in\ker(\widehat{A}_k), \quad k=1,2,3,
\end{equation}
as
\begin{equation}\label{changevar}
C_a:= \left(  \begin{array}{ccccccc}
                \frac{a_1^2}{a_1b_1+\theta_1} & 0 & 0 & 0 & 0 & 0\\
                1 & 0 & 0 & 1 & 0 & 0\\
                0 & \frac{a_2^2}{a_2b_2+\theta_2} & 0 & 0 & 0 & 0\\
                0 & 1 & 0 & 0 & \frac{a_1}{a_2} & 0\\             
                0 & 0 & \frac{a_3^2}{a_3b_3+\theta_3} & 0 & 0 & 0\\
                0 & 0 & 1 & 0 & 0 & \frac{a_1}{a_3}
            \end{array}
    \right),    
\end{equation}
and the quotient is performed by restricting to the $3\times3$ lowest diagonal blocks of the conjugated $3$-tuple $\left({C_a}^{\!\!\!-1}B_1C_a,{C_a}^{\!\!\!-1}B_2C_a,{C_a}^{\!\!\!-1}B_3C_a\right)$, where $(B_1,B_2,B_3)=c_{0}(\widehat{A}_1,\widehat{A}_2,\widehat{A}_3)$.
The resulting triple $(\widetilde{B}_1,\widetilde{B}_2,\widetilde{B}_3)$, in the anticipated form \eqref{Fuch3} as
\begin{equation}\label{tildeB}
\begin{aligned}
    \widetilde{B}_1 &:= 
                        \begin{pmatrix}
                            \theta_1&\frac{1}{2}\left(a_2 b_1-a_1 b_2+\theta_1{a_2\over a_1}+\theta_2{a_1\over a_2}\right)&\frac{1}{2}\left(a_3 b_1-a_1 b_3-\theta_1{a_3\over a_1}+\theta_3{a_1\over a_3}\right)\\ 
                            0 & 0 & 0\\ 
                            0 & 0 & 0\\
                        \end{pmatrix},\\
    \widetilde{B}_2 &:= 
                        \begin{pmatrix} 
                            0 & 0 & 0\\
                            \frac{1}{2}\left(a_1 b_2-a_2 b_1+\theta_1{a_2\over a_1}+\theta_2{a_1\over a_2}\right)&
                            \theta_2&\frac{1}{2}\left(a_3 b_2-a_2 b_3+\theta_2{a_3\over a_2}+\theta_3{a_2\over a_3}\right)\\
                            0 & 0 & 0
                        \end{pmatrix},\\
    \widetilde{B}_3 &:= 
                        \begin{pmatrix}
                            0 & 0 & 0\\
                            0 & 0 & 0\\
                            \frac{1}{2}\left(a_1 b_3-a_3 b_1+\theta_1{a_3\over a_1}+\theta_3{a_1\over a_3}\right)&
                            \frac{1}{2}\left(a_2 b_3-a_3 b_2+\theta_2{a_3\over a_2}+\theta_3{a_2\over a_3}\right)&\theta_3
                    \end{pmatrix},
\end{aligned}
\end{equation}
sums up to give
\begin{equation}\label{Vmatrix}\resizebox{\textwidth}{!}{$
-V=
\begin{pmatrix}\theta_1&\frac{1}{2}\left(a_2 b_1-a_1 b_2+\theta_1{a_2\over a_1}+\theta_2{a_1\over a_2}\right)&\frac{1}{2}\left(a_3 b_1-a_1 b_3-\theta_1{a_3\over a_1}+\theta_3{a_1\over a_3}\right)\\
\frac{1}{2}\left(a_1 b_2-a_2 b_1+\theta_1{a_2\over a_1}+\theta_2{a_1\over a_2}\right)&
                            \theta_2&\frac{1}{2}\left(a_3 b_2-a_2 b_3+\theta_2{a_3\over a_2}+\theta_3{a_2\over a_3}\right)\\ 
\frac{1}{2}\left(a_1 b_3-a_3 b_1+\theta_1{a_3\over a_1}+\theta_3{a_1\over a_3}\right)&
                            \frac{1}{2}\left(a_2 b_3-a_3 b_2+\theta_2{a_3\over a_2}+\theta_3{a_2\over a_3}\right)&\theta_3\end{pmatrix}.$}
\end{equation}
Such coordinatization of $V$ fixes typos in \cite{Mazzocco2002} and is skew-symmetric for $\theta_1=\theta_2=\theta_3=0$: this specialization of the parameters recovers Dubrovin's operator, whose space of isomonodromic deformations coincide with that of semisimple Frobenius manifolds \cite{Dubrovin1996}.
\begin{remark}\label{rmk:freeadd}
    Much as $\MC$ ignored $\pazocal{L}$, the additive \textbf{parameter-free} $mc$ is defined by excluding $\mathscr{L}$ and thus fails to preserve irreducibility: $\mathrm{ker}(V)=\langle\mathbf{v}\rangle$ implies $\widetilde{B}_k\mathbf{v}=0$, $k=1,2,3$. Nevertheless, the $3\times3$ rank $1$ matrices $\widetilde{B}_k$ are obtained by convolving the $2\times2$ rank $1$ $\widehat{A}_k$s, making this instance of the additive middle convolution preserve the rank $1$ structure. The multiplicative (quantum) analogue of this phenomenon is formalized in \cite[Proposition 6]{DalMartello2024}.
\end{remark}

Now that Harnad duality and $\mathscr{F}_1$ speak the same convolutional language, it is immediate to verify they are compatible in the Riemann-Hilbert sense visualized by the commutative diagram \ref{fig:square}, which is nothing but a refined Painlevé square \eqref{square}.
In particular, commutativity of the left square follows from \Cref{thm:MCcorr} while commutativity of the right one relies on works by Boalch \cite{Boalch2005} and Guzzetti \cite{Guzzetti2016}.

Computation-wise, the diagram's lower row gets explicit on the rescaled coordinatization \eqref{rescOBG} by taking the classical limits in \cite[§5]{DalMartello2024}. In particular \cite[(97)]{DalMartello2024}, the pair of Stokes matrices identifies in $\mathrm{Mat}(\mathbb{C}_\iota[\pazocal{X}_\triangledown])$ with the quadruple of monodromy ones via the coordinatization
\begin{equation}\label{Spmformulae}
\begin{aligned}
    &S_1=U|_{q=1}=\begin{pmatrix}
    1 & -1-Z_{B2}^{-1}-\frac{e^{2\pi i \theta_1}}{Z_{O2}Z_{B2}} & 1+Z_{G2}^{-1}+Z_{B2}^{-1}Z_{G2}^{-1}+\frac{e^{2\pi i \theta_2}}{Z_{B2}Z_{G2}}\Big(1+Z_{B2}^{-1}+\frac{e^{2\pi i \theta_1}}{Z_{O2}Z_{B2}}\Big)\\ 
                    0 & 1 & -1-Z_{G2}^{-1}-\frac{e^{2\pi i \theta_2}}{Z_{B2}Z_{G2}}\\ 
                    0 & 0 & 1
\end{pmatrix},\\
    &S_2=L|_{q=1}=
                \begin{pmatrix}
                    e^{2\pi i \theta_1} & 0 & 0\\
                    e^{2\pi i \theta_2}(1+Z_{O2})+Z_{O2}Z_{B2}  & e^{2\pi i \theta_2} & 0\\
                    Z_{B2}(e^{2\pi i \theta_3}+Z_{G2})+Z_{O2}Z_{B2}Z_{G2}  & e^{2\pi i \theta_3}(1+Z_{B2})+Z_{B2}Z_{G2} & e^{2\pi i \theta_3}
                \end{pmatrix}.
\end{aligned}
\end{equation}
Notice the additive and multiplicative preconditioners underlying the leftmost column of the diagram in \Cref{fig:square} are precisely related by the Riemann-Hilbert correspondence in form \eqref{RHprecond}.
\begin{figure}
    \centering
    \begin{equation*}
        \begin{tikzcd}[row sep=2em, column sep=5em]
    \frac{\mathrm{d}}{\mathrm{d}\lambda}\Phi=\left(\sum_{k=1}^{3}\frac{{\widehat{A}}_k}{\lambda-u_k} \right)\Phi 
        \arrow[d,swap,"\mathrm{RH}"]
      \arrow[r,"mc"]
      \arrow[rr,bend left=15,dashed,"\mathfrak{H}^{\!\vee}"']
    & \frac{\mathrm{d}}{\mathrm{d}\lambda} X=\left(
        \sum_{k=1}^{3}\frac{{\widetilde{B}}_k}{\lambda-u_k} \right)X  \arrow[r,"\sum\widetilde{B}_k=\,-V"]
      \arrow[d,swap,"\mathrm{RH}"]
    & \frac{\mathrm{d}}{\mathrm{d}z}Y=\left(U+\frac{V-\One}{z}\right)Y
      \arrow[d,swap,"\mathrm{RH}"]
    \\
     \arrow[rr,bend right=15,dashed,"\mathscr{F}_1"]
     (\widehat{M}_1,\widehat{M}_2,\widehat{M}_3) \arrow[r,"\MC"] & (R_1,R_2,R_3) \arrow[r,"\prod R_k=\,M_0^{-1}"] & (S_1,S_2)
    \end{tikzcd}
    \end{equation*}
    \caption{Through their convolutional formulations, Harnad duality and the classical GDAHA functor come together to form this commutative diagram, whose rows identifies Fuchsian and Birkhoff formulations of $P_{V\!I}$ in both differential and monodromic frameworks. In particular, operations complementing the middle convolutions, namely the Laplace transform and the Killing factorization (additive and multiplicative case, respectively), are written just as maps of matrix data.}\label{fig:square}
\end{figure}
\begin{remark}
Formulae \eqref{Spmformulae} are extracted by the Killing factorization of a selected representative triple of pseudo-reflections $\bar{\mathbf{R}}=(\bar{R}_1,\bar{R}_2,\bar{R}_3)$ \cite[(96)]{DalMartello2024}. The latter can be obtained straight from convolution using the unique change of basis matrix
\begin{equation}\resizebox{.65\textwidth}{!}{$
    \begin{pmatrix}
                e^{2\pi i \theta_1}Z_{O2}^{-1} & 0 & 0 & 1 & 0 & 0\\
                1 & 0 & 0 & 0 & 0 & 0\\
                0 & -1-e^{-2\pi i \theta_2}Z_{B2}^{-1} & 0 & 0 & -e^{-2\pi i \theta_2} & 0\\
                0 & 1 & 0 & 0 & 0 & 0\\             
                0 & 0 & -1 & 0 & 0 & 0\\
                0 & 0 & 1+e^{2\pi i \theta_3}Z_{G2}^{-1} & 0 & 0 & -Z_{B2}^{-1}Z_{G2}^{-1}
    \end{pmatrix},$}
\end{equation}
whose first three columns are in common with \eqref{changeofvarOka}.
\end{remark}
With full theoretical details provided for the Painlevé square, we are ready to plug $w_2$ into the unfolding diagrammatic picture. 

As anticipated in the Introduction, square \eqref{square} should be understood as a definition of $P_{V\!I}$ at values $(\theta_1,\theta_2,\theta_3,\theta_\infty)$ in terms of its Fuchsian and Birkhoff representations, jointly with monodromy and Stokes data.
Analogously, there is a square corresponding to $P_{V\!I}\left(w_2(\theta)\right)$ at values \eqref{thetachange}, whose Fuchsian side is obtained from that of \eqref{square} via the pair of middle convolutions in diagram \eqref{pb1sol}.

A final piece of literature is needed to unravel the pair of maps for the Birkhoff side.
Change \eqref{thetachange} admits an analogous realization for irregular systems \eqref{irr}, formalized by Mazzocco \cite{Mazzocco2004} as the elementary gauge transformation\footnote{The parameter $\theta_\infty$ appears here with opposite sign with respect to the one in \cite{Mazzocco2004}: this is due to the different, but equivalent, choice of $\vartheta$ between the nonzero eigenvalues of $V$.}
\begin{equation}\label{w2gauge}
    Y \longmapsto \widetilde{Y}:=z^{-\vartheta}Y,
\end{equation}
recalling that $\vartheta=-\,\frac{\theta_1+\theta_2+\theta_3+\theta_\infty}{2}\in\mathrm{eig}(V)$.
It is easily checked the system undergoes the change $V \mapsto V-\vartheta\One$, so that the spectrum of $V$ gets shifted to $\{0,-\vartheta,\theta_\infty\}$.
Thanks to the convolutional-based framework we established, \eqref{w2gauge} easily follows by combining Filipuk's $w_2$ with duality. Indeed, the triple of $3\times3$ rank $1$ matrices resulting from $\widetilde{\mathbf{A}}:=mc_{\vartheta}(\widehat{\mathbf{A}})$, c.f. \cite[(5.16)]{Filipuk2007}, must have spectra $\{0,0,\theta_k+\vartheta\}$, and thus yields the Birkhoff analogue of $\mathrm{Fu}_{\widetilde{\mathbf{A}}}$ as precisely the system
$$\frac{\mathrm d}{{\mathrm d}z}\widetilde{Y}=\left(U+\frac{(V-\vartheta\One)-\One}{z}\right)\widetilde{Y}.$$
Notice that this operation on Birkhoff systems too admits a convolutional formulation: a constant shift on $V$ is nothing but the addition $ad_{(0,-\vartheta)}$ in the generalized sense of \Cref{Takemura}, which translates to an elementary scaling on Stokes data.  
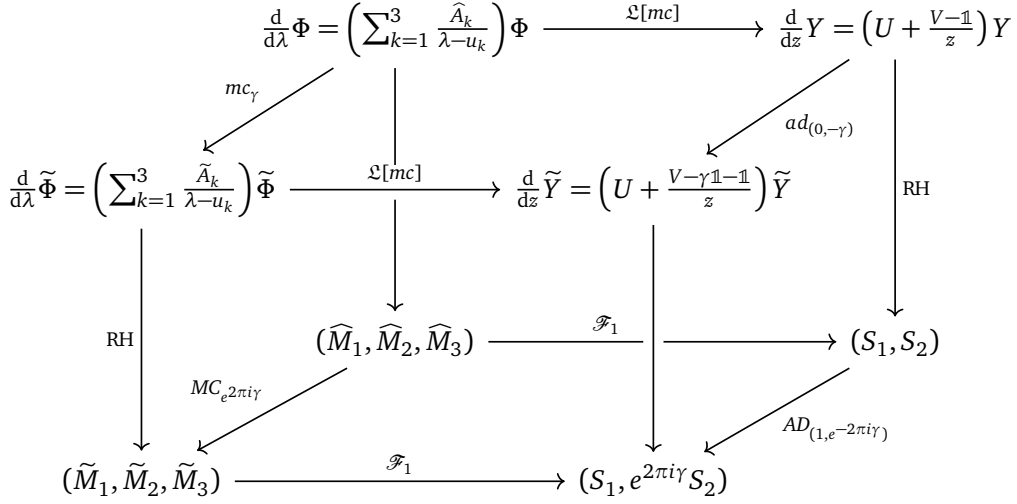
\begin{figure}
    \centering
\begin{equation*}
\begin{tikzcd}[row sep=4em, column sep=-1.5em]
& \frac{\mathrm{d}}{\mathrm{d}\lambda}\Phi=\left(\sum_{k=1}^{3}\frac{\widehat{A}_k}{\lambda-u_k} \right)\Phi \arrow[dl,"mc_{\vartheta}"'] \arrow[rr,"\mathfrak{L}\text{[\emph{mc}]}"] \arrow[dd] & & \frac{\mathrm{d}}{\mathrm{d}z}Y=\left(U+\frac{V-\One}{z}\right)Y \arrow[dl,"ad_{(0,-\vartheta)}"] \arrow[dd,"\mathrm{RH}"] \\
\frac{\mathrm{d}}{\mathrm{d}\lambda}\widetilde{\Phi}=\left(\sum_{k=1}^{3}\frac{\widetilde{A}_k}{\lambda-u_k} \right)\widetilde\Phi \arrow[rr, crossing over, "{\setlength{\fboxsep}{0pt}\colorbox{white}{$\phantom{\widehat{M}}\mathfrak{L}\text{[\emph{mc}]}\phantom{\widehat{M}}$}}"] \arrow[dd,"\mathrm{RH}"'] & & \frac{\mathrm{d}}{\mathrm{d}z}\widetilde{Y}=\left(U+\frac{V-\vartheta\One-\One}{z}\right)\widetilde{Y} \\
& (\widehat{M}_1,\widehat{M}_2,\widehat{M}_3) \arrow[dl,"\MC_{e^{2\pi i \vartheta}}"'] \arrow[rr,"\mathscr{F}_1\hspace{4em}"] & & (S_1,S_2) \arrow[dl,"A\!D_{(1,e^{-2\pi i\vartheta})}"] \\
(\widetilde{M}_1,\widetilde{M}_2,\widetilde{M}_3) \arrow[rr,"\mathscr{F}_1"] & & (S_1,e^{2\pi i\vartheta}S_2) \arrow[from=uu, crossing over]
\end{tikzcd}
\end{equation*} 
\caption{The cube-shaped diagram combining all four convolutional realizations of Okamoto's symmetry. $\mathscr{F}_1$ denotes the classical GDAHA functor, while $\mathfrak{L}[mc]$ indicates the Laplace transformation of the parameter-free additive middle convolution. Faces of this Okamoto cube involve both the Painlevé square (front and back) and diagram \eqref{pb1sol} (left).}\label{fig:cube}
\end{figure}

Combining all four operations in the commutative cube of \Cref{fig:cube}, Okamoto's symmetry acquires a complete diagrammatic realization as the quadruple of convolutional arrows mapping the Painlevé square \eqref{square} of $P_{V\!I}(\theta)$ to the Painlevé square of $P_{V\!I}\left(w_2(\theta)\right)$.
This final all-encompassing diagram, which we name the \textbf{Okamoto cube}, provides a beautiful solution to \Cref{pb2} and thus concludes the paper.

\appendix

\section{Cluster ensembles}\label{app:X&A}

This appendix gives a minimal primer to the theory of cluster ensembles.
We will closely follow \cite{Fock2009}, simplifying notions to a quiver-theoretic setting.

Let $\pazocal{Q}=(Q_0,Q_1,h,t)$ be a quiver: for any arrow $a\in Q_1$, the vertices $h(a),t(a)\in Q_0$ give its head and tail. We restrict to loop-free quivers with no $2$-cycles, i.e., $h(a)\neq t(a)$ and  no opposite arrows connect the same pair of vertices.
Then, the skew-symmetric matrix $(\epsilon_{ij})$, for $\epsilon_{ij}\in\mathbb{Z}$ counting the arrows from vertex $i$ to vertex $j$, is the exchange matrix of $\pazocal{Q}$.

Two algebraic tori $(\mathbb{G}_m)^{|\pazocal{Q}_0|}$ can be attached to any such quiver: the cluster $\pazocal{X}$-torus $\pazocal{X}_\pazocal{Q}$ and the cluster $\pazocal{A}$-torus $\pazocal{A}_\pazocal{Q}$. The former is Poisson, while the latter is endowed with a degenerate closed logarithmic $2$-form $\Omega$.
Over standard coordinates on the tori's factors,
\begin{equation}
    \{X_i,X_j\}=\epsilon_{ij}X_iX_j
\end{equation}
and
\begin{equation}
    \Omega=\epsilon_{ij}\mathrm{d\hspace{.1em}log}\,A_i\wedge\mathrm{d\hspace{.1em}log}\,A_j
\end{equation}
for $1\leq i,j\leq |\pazocal{Q}_0|$.

Both sets $\{X_k\}$ and $\{A_k\}$ are clusters that glue together via mutations into so-called $\pazocal{X}$- and $\pazocal{A}$-varieties, respectively. 
In quiver terms, mutating at vertex $k$ is a $3$-step recipe:
\begin{enumerate}
    \item For each oriented two-arrow path $i\rightarrow k \rightarrow j$, add a new arrow $i \rightarrow j$;
    \item Flip all arrows incident with $k$;
    \item Remove all pairwise disjoint 2-cycles.
\end{enumerate}
The resulting quiver isomorphism $\mu_k:\pazocal{Q}\longmapsto\pazocal{Q}'$ is involutive, namely $\mu^2_k=\mathrm{id}$, and induces a rational map on each torus: on cluster coordinates,
\begin{equation}
    \mu_k^*X'_i=\begin{cases}\label{Xmutation}
        X_k^{-1} \hfill i=k, \\
        X_i\left(1+X_k^{-\mathrm{sgn}(\epsilon_{ik})}\right)^{-\epsilon_{ik}}\hspace{5.2em} i \neq k,
    \end{cases}
\end{equation}
and
\begin{equation}
    \mu_k^*A'_i=\begin{cases}\label{Amutation}
        A_k^{-1}\left(\prod_{j|\epsilon_{kj}>0}A_j^{\epsilon_{kj}}+\prod_{j|\epsilon_{kj}<0}A_j^{-\epsilon_{kj}}\right) \hspace{2em} i=k, \\
        A_i \hfill i \neq k.
    \end{cases}
\end{equation}
In particular, when the set $\{j|\epsilon_{kj}>0\}$ (or $\{j|\epsilon_{kj}<0\}$) is empty, the assigned monomial in \eqref{Amutation} is $1$. 

These mutations respect the Poisson structure and the form $\Omega$, respectively.
Moreover, thanks to the Laurent phenomenon for $\pazocal{A}$-tori, chains of mutation formulae \eqref{Amutation} always result in Laurent polynomials on cluster $\pazocal{A}$-variables.
\begin{definition}
    The pair of spaces $(\pazocal{X}_\pazocal{Q},\pazocal{A}_\pazocal{Q})$ is called a \textbf{cluster ensemble}.
\end{definition}
Since any quiver isomorphism $\sigma=(\sigma_0,\sigma_1)$ preserves ensembles, with
\begin{equation}
    \sigma^* X'_{\sigma_0(i)}=X_i, \quad \sigma^* A'_{\sigma_0(i)}=A_i,
\end{equation} 
mutations are often considered in composition with quiver isomorphisms to form \textbf{cluster transformations}.

We conclude with a key feature of the ensemble structure: the homomorphism
\begin{equation}\label{p}
    p: \pazocal{A_{\pazocal{Q}}}\rightarrow\pazocal{X_{\pazocal{Q}}},
\end{equation}
commuting with mutations and given in every cluster chart by
\begin{equation}
    p^*X_i=\prod\nolimits_jA_j^{\epsilon_{ij}}.
\end{equation}
Fibers of $p$ are the leaves of $\Omega$'s null-foliation, and the subtorus $\pazocal{U}_{\pazocal{Q}}:=p(\pazocal{A}_{\pazocal{Q}})$ is a symplectic leaf of $\pazocal{X}_{\pazocal{Q}}$'s Poisson structure. In particular, the symplectic structure induced on $\pazocal{U}_{\pazocal{Q}}$ by $\Omega$ matches the one given by restricting the Poisson structure on $\pazocal{X}_{\pazocal{Q}}$.

{\footnotesize
\bibliographystyle{abbrv}
\bibliography{references}

@article{Chekhov2020,
title = {{Symplectic structures on Teichmüller spaces ${\mathfrak{T}_{g,s,n}}$ and cluster algebras}},
journal = {Proc. Steklov Inst. Math.},
volume = {309},
pages = {122-141},
year = {2020},
author = {Chekhov, Leonid}
}

@article{Pardo2015,
title = {Colorful associahedra and cyclohedra},
journal = {J. Combin. Theory Ser. A},
volume = {129},
pages = {122-141},
year = {2015},
author = {Gabriela Araujo-Pardo and Isabel Hubard and Deborah Oliveros and Egon Schulte}
}

@ARTICLE{Guzzetti2016,
  author = {Davide Guzzetti},
  title = {{On {S}tokes matrices in terms of connection coefficients}},
  journal = {Funkcial. Ekvac.},
  year = {2016},
  volume = {59},
  pages = {383-433},
  owner = {local_Hp_Envy},
  timestamp = {2023.06.27}
}

@ARTICLE{Boalch2005,
  author = {P. Boalch},
  title = {{F}rom {K}lein to {P}ainlevé via {F}ourier, {L}aplace and {J}imbo},
  journal = {Proc. Lond. Math. Soc.},
  year = {2005},
  volume = {90},
  file = {:C\:\\Users\\david\\OneDrive\\Desktop\\School\\UoB\\PhD\\Docear\\Literature\\Boalch2005.pdf:PDF},
  owner = {david},
  timestamp = {2021.10.31}
}

@ARTICLE{Dettweiler2007,
  author = {M. Dettweiler and S. Reiter},
  title = {{M}iddle convolution of {F}uchsian systems and the construction of
	rigid differential systems},
  journal = {J. Algebra},
  year = {2007},
  volume = {318}
}

@inbook{Takemura2012,
author = {Kouichi Takemura},
publisher = {Banach Center Publications},
pages = {141-152},
title = {{Formal and Analytic Solutions of Differential and Difference Equations}},
chapter = {Euler's integral transformation for systems of linear differential equations with irregular singularities},
year = {2012},
}

@inbook{Dubrovin1996,
    author = {B. Dubrovin},
    title = {Integrable Systems and Quantum Groups},
    publisher = {Springer},
    chapter = {Geometry of 2D topological field theories},
    pages = {120-348},
    year = {1996}
}

@ARTICLE{Filipuk2006,
  author = {G. Filipuk},
  title = {{O}n the middle convolution and birational symmetries of the sixth
	{P}ainlevé equation},
  journal = {Kumamoto J. Math.},
  year = {2006},
  volume = {19},
  pages = {15-23}
}

@ARTICLE{Filipuk2007,
  author = {Galina Filipuk and Yoshishige Haraoka},
  title = {{M}iddle convolution and deformation for {F}uchsian systems},
  journal = {J. Lond. Math. Soc.},
  year = {2007},
  volume = {76},
  pages = {438-450},
  file = {:C\:\\Users\\david\\OneDrive\\Desktop\\School\\UoB\\PhD\\Docear\\Literature\\Haraoka&Filipuk2007.pdf:PDF},
  owner = {david},
  timestamp = {2021.10.28}
}

@ARTICLE{FST08,
  author = {Fomin, S. and Shapiro, M. and Thurston, D},
  title = {{Cluster algebras and triangulated surfaces. Part I: Cluster complexes}},
  journal = {Acta Math.},
  year = {2008},
  volume = {201}
}

@ARTICLE{BGM2018,
  author = {Bershtein, M. and Gavrylenko, P. and Marshakov, A},
  title = {{Cluster integrable systems, $q$-Painlevé equations and their quantization}},
  journal = {J. High Energy Phys.},
  year = {2018},
  volume = {77}
}

@ARTICLE{Biblio2017,
  author = {Y. Bibilo and G. Filipuk},
  title = {{Constructive solutions to the Riemann–Hilbert problem and middle convolution}},
  journal = {J. Dyn. Control Syst.},
  year = {2017},
  volume = {23},
  pages = {55-70}
}

@ARTICLE{Fock2009,
  author = {V. Fock and A. Goncharov},
  title = {{Cluster ensembles, quantization and the dilogarithm}},
  journal = {Ann. Sci. \'Ec. Norm. Sup\'er.},
  year = {2009},
  volume = {42},
  owner = {local_Hp_Envy},
  timestamp = {2023.06.07}
}

@misc{Fomin2021,
  author = {S. Fomin and L. Williams and A. Zelevinsky},
  title = {{Introduction to cluster algebras. Chapters 1-3}},
  year = {2024},
  eprint={1608.05735},
  archivePrefix={arXiv}
}

@ARTICLE{Goncharov2022,
  author = {A. Goncharov and L. Shen},
  title = {Quantum geometry of moduli spaces of local systems and representation
	theory},
  journal = {To appear in Ast\'erisque},
  year = {2025},
}

@BOOK{Katz1996,
  title = {{Rigid local systems}},
  publisher = {Princeton University Press},
  year = {1996},
  author = {N. Katz},
  file = {:C\:\\Users\\david\\OneDrive\\Desktop\\School\\UoB\\PhD\\Docear\\Literature\\Katz1996.pdf:PDF},
  owner = {david},
  timestamp = {2021.11.18}
}

@inbook{Mazzocco2002,
  author = {M. Mazzocco},
  title = {The Kowalevski Property},
  chapter = {{Painlevé sixth equation as isomonodromic deformations equation of an irregular system}},
  year = {2002},
  pages = {219–238},
    publisher = {American Mathematical Society}
}

@ARTICLE{Mazzocco2018,
  author = {M. Mazzocco},
  title = {{Embedding of the rank $1$ DAHA into $\mathrm{Mat}(2, \mathbb{T}_q)$ and its automorphisms}},
  journal = {Adv. Stud. Pure Math.},
  year = {2018},
  volume = {76},
}

@ARTICLE{Hikami2019,
  author = {Kazuhiro Hikami},
  title = {{Note on character varieties and cluster algebras}},
  journal = {SIGMA Symmetry Integrability Geom. Methods Appl.},
  year = {2019},
  volume = {15}
}

@ARTICLE{DalMartello2024,
  author={D. {Dal Martello} and M. Mazzocco},
  title = {{Generalized double affine Hecke algebras, their representations, and higher Teichmüller theory}},
  journal = {Adv. Math.},
volume = {450},
  year = {2024}
}

@article{Okamoto1987,
year = {1987},
volume = {146},
author = {K. Okamoto},
title = {{Studies on the Painlevé equations I}},
journal = {Ann. Mat. Pura Appl.}
}

@article{Inaba2004,
    author = {Inaba, Michi and Iwasaki, Katsunori and Saito, Masa},
    title = {{Bäcklund transformations of the sixth Painlevé equation in terms of Riemann-Hilbert correspondence}},
    journal = {Int. Math. Res. Not.},
    volume = {2004},
    pages = {1-30},
    year = {2004}
}

@article{GS2018,
    author = {Goncharov, A and Shen, Linhui},
    title = {{Donaldson-Thomas transformations of moduli spaces of $G$-local systems}},
    journal = {Adv. Math.},
    volume = {327},
    year = {2018}
}

@misc{Masuda2024,
 author = {T. Masuda and N. Okubo and T. Tsuda},
  title = {{Birational Weyl group actions via mutation combinatorics in cluster algebras}},
    eprint={2303.06704},
    archivePrefix={arXiv},
    year={2025}
}

@misc{Ishibashi2024,
 author = {T. Ishibashi and S. Kano},
  title = {{Entropy of cluster DT transformations and the finite-tame-wild trichotomy of acyclic quivers}},
    eprint={2403.01396},
    archivePrefix={arXiv},
    year={2024}
}

@article{Mazzocco2004,
    author = {Mazzocco, Marta},
    title = "{Irregular isomonodromic deformations for Garnier systems and Okamoto's canonical transformations}",
    journal = {J. Lond. Math. Soc.},
    volume = {70},
    year = {2004}
}

@article{Painleve1900,
title = {Mémoire sur les équations différentielles dont l'intégrale générale est uniforme},
journal = {Bull. Soc. Math. France},
volume = {28},
year = {1900},
author = {P. Painlevé}
}

@article{Fuchs1905,
title = {Sur quelques équations différentielles linéaires du second ordre},
journal = {C. R. Math. Acad. Sci. Paris},
volume = {141},
year = {1905},
author = {R. Fuchs}
}

@article{Harnad1994,
title = {Dual isomonodromic deformations and moment maps to loop algebras},
journal = {Comm. Math. Phys.},
volume = {166},
year = {1994},
author = {J. Harnad}
}
}

\end{document}